\def\rank{\mathop{\rm rank}\nolimits}
\def\diag{\mathop{\rm diag}}
\def\wgt{\mathop{\rm wgt}\nolimits}
\def\ker{\mathop{\rm Ker}}
\def\im{\mathop{\rm im}}
\def\tr{\mathop{\rm tr}\nolimits}
\def\css{\mathop{\rm CSS}\nolimits}
\def\F{\mathbb{F}}
\newtheorem{theorem}{Theorem}
\newtheorem{statement}[theorem]{Statement}
\newtheorem{lemma}[theorem]{Lemma}
\newtheorem{conjecture}[theorem]{Conjecture}
\newtheorem{example}[theorem]{Example}
\begin{document}
\title{Minimal distances for certain quantum product codes\\ and tensor products of chain complexes}
\date\today
\author{Weilei Zeng}
\address{Department of Physics \& Astronomy, University of California,
  Riverside, California 92521, USA}
\author{Leonid P.\ Pryadko}
\address{Department of Physics \& Astronomy, University of California,
  Riverside, California 92521, USA}
\email{leonid.pryadko@ucr.edu}

\begin{abstract}
  We use a map to quantum error-correcting codes and a subspace
  projection to get lower bounds for minimal homological distances in
  a tensor product of two chain complexes of vector spaces over a
  finite field.  Homology groups of such a complex are described by
  the K\"unneth theorem.  We give an explicit expression for the
  distances when one of the complexes is a linear map between two
  spaces.  The codes in the construction, subsystem product codes and
  their gauge-fixed variants, generalize several known families of
  quantum error-correcting codes.
\end{abstract}

\maketitle


\section{Introduction}

Central idealization in topology is the focus on continuity while
sizes are ignored.  Topologically speaking, an opening through a straw
is no different than a pinhole in a piece of paper, or a missing pixel
in an image.  Yet a missing pixel could be just an artifact of the
noisy data.  No wonder that in practical applications the sizes and
distances \emph{are} important, and are incorporated into
computational algorithms in a variety of
ways\cite{Kaczynski-Mischaikow-Mrozek-2003,%
  Kaczynski-Mischaikow-Mrozek-book-2004,%
  GonzalezLorenzo-Bac-Mari-Real-2016,Audoux-Couvreur-2017,Dey-Li-Wang-2018}.

Quantum stabilizer and, more generally, subsystem codes offer an
excellent example of a problem where such a distance is extremely
relevant\cite{Bombin-MartinDelgado-2007,%
  Bullock-Brennen-2007,Audoux-Couvreur-2017}.  Namely, a qubit quantum
stabilizer code is isomorphic to a chain complex ${\cal C}$ with three
finite-dimensional binary spaces, where logical operators correspond
to elements of the first homology group $H_1({\cal C})$.  In the case
of a Calderbank-Shor-Steane\cite{Calderbank-Shor-1996,Steane-1996}
(CSS) code, the rank of this group gives the number $k$ of encoded
qubits, the code length $n$ is the dimension of the corresponding
space ${\cal C}_1$, while the distance $d$ of the quantum error
correcting code (QECC), the minimum weight of a non-trivial element in
$H_1({\cal C})$ (or the corresponding co-homology group), has to be
sufficiently large for the code to offer a protection against
environmental errors.

In fact, topological QECCs, generalizations of the toric
code\cite{Bravyi-Kitaev-1998,Freedman-Meyer-1998,%
  Dennis-Kitaev-Landahl-Preskill-2002,Bombin-MartinDelgado-2007,%
  Castelnovo-Chamon-2008,Mazac-Hamma-2012,%
  Bombin-Chhajlany-Horodecki-MartinDelgado-2013} invented by
Kitaev\cite{kitaev-anyons}, are presently at the crux of research in
quantum error correction (QEC).  Such a code can be constructed from
any tessellation of an arbitrary surface or a higher-dimensional
manifold.  The essential advantage of topological codes is locality:
stabilizer generators, operators to be measured frequently, involve
only qubits in the immediate vicinity of each other; this is what
makes planar surface codes so attractive and practical.  However,
locality also limits the parameters of topological
codes\cite{Bravyi-Terhal-2009,Bravyi-Poulin-Terhal-2010,%
  Delfosse-2013,Flammia-Haah-Kastoryano-Kim-2017}.  In particular, for
a code of length $n$ with generators local in two dimensions, the
number of encoded qubits $k$ and the minimal distance $d$ satisfy the
inequality\cite{Bravyi-Terhal-2009} $kd^2\le {\cal O}(n)$.  This
implies asymptotically zero rate, $R=k/n\to0$, whenever $d$ diverges
with $n$.

More general quantum low-density parity-check (LDPC) codes have
stabilizer generators of bounded weight but no locality constraint.
This is the only class of codes known so far to combine finite rates
with non-zero fault-tolerant (FT)
thresholds\cite{Kovalev-Pryadko-FT-2013,Dumer-Kovalev-Pryadko-bnd-2015},
to allow scalable quantum computation with a finite multiplicative
overhead\cite{Gottesman-overhead-2014}.  However, unlike in the
classical case, where capacity-approaching codes can be constructed
from random sparse matrices\cite{Gallager-1962,Gallager-book-1963,%
  Litsyn-Shevelev-2002,%
  Richardson-Shokrollahi-Amin-Urbanke-2001}, matrices suitable for
constructing quantum LDPC codes are highly atypical in the
corresponding ensembles.  Thus, an algebraic ansatz is required to
construct large-distance quantum LDPC codes.  Precious few examples
of algebraic constructions are known that give finite rate codes and
also satisfy sufficient conditions\cite{Dumer-Kovalev-Pryadko-bnd-2015} for
fault-tolerance: bounded weight of stabilizer generators and minimum
distance that scales logarithmically or faster with the block length
$n$.  Such constructions include hyperbolic codes on two- and
higher-dimensional
manifolds\cite{Zemor-2009,Delfosse-Zemor-2010,Breuckmann-Terhal-2015,
  Breuckmann-Vuillot-Campbell-Krishna-Terhal-2017,Guth-Lubotzky-2014},
and quantum hypergraph-product (QHP) \& related
codes\cite{Tillich-Zemor-2009,Tillich-Zemor-2014,%
  Kovalev-Pryadko-Hyperbicycle-2013,Zeng-Pryadko-2018}.  Further, some
constructions, e.g., in Refs.\
\onlinecite{Couvreur-Delfosse-Zemor-2012,Bravyi-Hastings-2013,%
  Audoux-2014,%
  Audoux-Couvreur-2017,Hastings-codes-2016}, have finite rates and
relatively high distances, with the stabilizer generator weights that
grow with $n$ logarithmically.  It is not known whether these codes
have non-zero FT thresholds.  However, such codes can be modified into
those with provable FT thresholds with the help of weight
reduction\cite{Hastings-weight-2016,Hastings-Haah-ODonnell-2020}.

The original QHP ansatz\cite{Tillich-Zemor-2009} by Tillich and
Z\'emor can be seen as a tensor product of two chain complexes
${\cal A}$ and ${\cal B}$, each involving just two finite-dimensional
binary spaces with chosen bases, so that the corresponding boundary
operators are just binary matrices without any additional constraints.
The resulting chain complex has three spaces; elements of the first
homology group of dimension $k=\rank H_1({\cal A}\times {\cal B})$
form a half of the quantum code
$\mathcal{Q}_1(\mathcal{A}\times \mathcal{B})$ encoding $k$ qubits
(the other half comes from the corresponding co-homology group).  This
dimension can be immediately recovered from the K\"unneth
formula\cite{Kunneth-1923,Kunneth-1924}.  The main result by Tillich
and Z\'emor is the expression for the minimal distance.  This was
generalized by the present authors to homology groups in a tensor
product of a general chain complex over binary spaces with that
involving just two spaces\cite{Zeng-Pryadko-2018}.

In this work, we offer a generalization of the distance result in
Ref.~\onlinecite{Zeng-Pryadko-2018} to a tensor product of two chain
complexes of vector spaces over any finite field $F$, with one of the
complexes still required to be a linear map between a pair of spaces.
While the original proof\cite{Zeng-Pryadko-2018} would still work with
a general field, here we give a simpler proof for the lower bound on
the minimum distance, formulated in terms of a \emph{projected}
product complex with the level-$j$ subspace projected onto just one
subspace
${\cal A}_i\otimes {\cal B}_{j-i}\subset ({\cal A}\times{\cal B})_j$.
As a result of the projection, the quantum code
$\mathcal{Q}_j(\mathcal{A}\times \mathcal{B})$ associated with the
$j$-th homology group of the product complex is replaced by an
$F$-linear quantum subsystem code; its distance gives a lower bound on
the distance associated with the homology group
$H_j(\mathcal{A}\times \mathcal{B})$ of the original product
complex.  
When one of the complexes has length two, the minimum distance of the
subsystem code can be computed and, as in the binary case, the result
saturates the upper bound.

While the construction also works for a product of chain complexes of
arbitrary length, we failed to find a tight lower bound on the
distance of the corresponding projected codes.  Further, we have found
a class of examples, a generalization of the homological product of
Steane code with
itself\cite{Bravyi-Hastings-2013,Audoux-Couvreur-2017}, where the
distance in the projected complex is strictly smaller than the upper
bound.  However, through extensive numerics for
$q\in\{2,3,2^2,5,7,2^3,3^2,11\}$, we could not find a single case
where the homological distance in the full product complex would fail
to saturate the upper bound.  We conjecture that in a product of
general chain complexes, the upper bound on the homological distance
is saturated.

\textsc{Potential applications:} In theory of QEC, in addition to
defining new classes of quantum LDPC codes with parameters known
explicitly, our construction may be useful for (i) optimizing repeated
measurements in the problem of FT quantum error
correction\cite{Wang-Harrington-Preskill-2003,
Andrist-PhD-2012,
  Kovalev-Pryadko-FT-2013,Dumer-Kovalev-Pryadko-bnd-2015},
   (ii) related problem of single-shot error
correction\cite{Fujiwara-2014,Ashikhmin-Lai-Brun-2014,%
  Ashikhmin-Lai-Brun-2016,Campbell-2018},
(iii) analysis of transformations between different QECCs, like the
distance-balancing trick by Hastings\cite{Hastings-weight-2016}, and
(iv) construction of asymmetric quantum CSS codes optimized for operation
where error rates for $X$ and $Z$ channels may differ
strongly\cite{Ioffe-Mezard-2007,Evans-2007,Stephens-2008,%
  Aliferis-Preskill-2008,sarvepalli-2009,Tuckett-Bartlett-Flammia-2018}.

More generally, K\"unneth formula is one of the most important and
widely known results in algebraic topology, see, e.g.,
Ref.~\onlinecite{Weibel-book-1994}.  Its well known consequence is the
relation between the Betti numbers of two manifolds and their product,
which can be written in terms of a product of the corresponding
generating functions, the Poincare polynomials
$p(x)=\mathrm{b}_0+\mathrm{b}_1x+\mathrm{b}_2x^2+\ldots$.  Generally,
$\mathrm{b}_k$ is the rank of the $k$\,th homology group.  For
manifolds in three dimensions, the zeroth Betti number,
$\mathrm{b}_0$, gives the number of connected components, the first,
$\mathrm{b}_1$, the number of one-dimensional holes (incontractible
cycles), and $\mathrm{b}_2$ the number of closed surfaces that cut out
internal cavities.  In particular, for a torus, $p(x)=1+2x+x^2$, which
can be written as $(1+x)^2$, the square of the Poincare polynomial for
a circle.

Our results can be seen as equipping K\"unneth formula with a
distance.  For example, consider a torus defined via periodic boundary
conditions on a plane, e.g., with periods $L_x$ and $L_y$ along the
$x$ and $y$ directions. Then, the systola (girth in the case of a
graph) is $\min(L_x,L_y)$, while the surface area (number of
plaquettes) is $L_xL_y$.  More generally, for a tensor product of a
circle with perimeter $L$ and an arbitrary manifold with
systola $ L_1'$, minimum surface area $L_2'$, etc., the corresponding
dimensions are given by $\min(L,L_1')$, $\min(LL_1',L_2')$, \ldots

The outline of the rest of the paper is as follows.  In
Sec.~\ref{sec:background} we go over the necessary background facts
from theory of classical and quantum error-correcting codes, as well
as chain complexes of vector spaces over a finite field $F$.  We also
establish the relation between (co)homology groups in such a complex
and $F$-linear quantum codes.  In Sec.~\ref{sec:product-codes} we
describe the construction and derive upper and lower bounds for
minimal distances of several related families of ``product'' codes
constructed in terms of Kronecker products of matrices associated with
a pair of quantum codes whose parameters are known.  In
Sec.~\ref{sec:complexes} we formulate main results in application to
chain complexes, give detailed proofs, and discuss their use in
fault-tolerant quantum error correction.  Finally, in
Sec.~\ref{sec:extensions}, we discuss some extensions of present
results.

\section{Background}
\label{sec:background}

\subsection{Classical $q$-ary codes}
A classical $q$-ary code\cite{MS-book} ${\cal C}$ with parameters
$(n,K,d)_q$ is a collection of $K$ strings (codewords) of length $n$
over an alphabet with $q$ symbols.  The code distance $d$ is the
minimum number of positions where two strings in the code differ.  A
linear $q$-ary code, where $q$ is a power of a prime, is a
$k$-dimensional subspace of the $n$-dimensional vector space $F^n$
over the field $F\equiv \mathbb{F}_q$.  Such a code contains $K=q^k$
strings.  A linear code ${\cal C}\equiv{\cal C}_G$ with parameters
$[n,k,d]_q$ can be defined in terms of a generator matrix $G$ whose
rows are the chosen basis vectors; the dimension $k$ of the code
$\mathcal{C}_G$ is $k=\rank G$.  For a linear code, the distance $d$
is the minimum Hamming weight of a non-zero vector in the code.

A linear subspace in $F^n$ can be also specified in terms of its
orthogonal subspace.  To this end, one has to choose the inner product
to be used\cite{Rains-Sloane-self-dual-codes-1998,%
  Ketkar-Klappenecker-Kumar-Sarvepalli-2006,Fan-Zgang-2017}.  The
simplest choice is the usual Euclidean scalar product,
$a\cdot b\equiv a\,b^T$, where $a, b\in {F}^n$ are considered as
length-$n$ row vectors, and $b^T$ is the transposed vector.
Respectively, the dual $\mathcal{C}^\perp$ of a linear code
$\mathcal{C}$ is a collection of $q$-ary row vectors orthogonal to any
vector in $\mathcal{C}$,
\begin{equation}
  {\cal C}^\perp =\{ b\in F^n|  c\, b^T=0, \forall c\in\mathcal{C}\}.
  \label{eq:dual-code}
\end{equation}
For a linear code of size $|\mathcal{C}|=q^k$ and dimension $k$, the
dual code has size $|\mathcal{C}^\perp|=q^{n-k}$.  Generator matrix
$H$ of the dual code, ${\cal C}_H\equiv \mathcal{C}_G^\perp$, is called a
parity check matrix of the original code.  More generally, a pair of
$n$-column matrices $G$ and $H$ with elements in $F$ are called
\emph{mutually dual} if
\begin{equation}
GH^T=0,\quad \rank G+\rank H=n. \label{eq:dual-matrix}
\end{equation}

Given a string $c\in{F}^n$, denote $V\equiv \{1,2,\ldots,n\}$ the set
indexing the individual characters.  For any \emph{index set}
$I\subseteq V$ of length $|I|=r$, let $c[I]\in{F}^r$ be a substring of
$c$ with the characters in all positions $i\not\in I$ dropped.
Similarly, for an $n$-column matrix $G$ with rows $g_j$, $G[I]$ is
formed by the rows $g_j[I]$.  If ${\cal C}={\cal C}_G$ is an
$F$-linear code with the generating matrix $G$, then the code of length
$|I|$ with the generating matrix $G[I]$ is the code \emph{punctured}
outside $I$, ${\cal C}_{\rm p}[I]\equiv\{c[I]\,|\, c\in{\cal C}\}$.

The \emph{shortened} code ${\cal C}_{\rm s}[I]$ is formed similarly,
except only from the codewords supported inside $I$,
${\cal C}_{\rm s}[I]=\{c[I]\,|\,c=(c_1,c_2,\ldots,c_n)\in {\cal C}$ and
$c_i=0$ for each $i\not\in I\}$.  The dual of a punctured code
$\mathcal{C}_{\rm p}[I]$ is the shortened dual code,
$(\mathcal{C}_{\rm p}[I])^\perp=(\mathcal{C}^\perp)_{\rm s}[I]$.  To
express this relation in terms of matrices, consider a pair of
mutually dual matrices in Eq.~(\ref{eq:dual-matrix}) and a code
$\mathcal{C}\equiv \mathcal{C}_G=\mathcal{C}_H^\perp$.  Denote a
generator matrix of the shortened code $\mathcal{C}_{\rm s}[I]$ as
$G_I$.  Duality between the punctured original and the shortened dual
codes implies that the corresponding generator matrices $G_I$ and
$H[I]$ are also mutually dual\cite{MS-book},
\begin{equation}
  \label{eq:dual-shortening}
  H[I]\, G_I^T=0,\quad \rank G_I+\rank H[I]=|I|.
\end{equation}
Similarly, $H_I$ is a dual of the punctured matrix $G[I]$.

In relation to quantum codes, we  also consider $q$-ary linear
space $F^{2n}$ of length-$2n$ vectors in the form $e=(a|b)$, where
both $a$ and $b$ are row vectors of length $n$.  The \emph{symplectic
  product} of two such vectors is defined as
\begin{equation}
  \label{eq:symplectic-product}
  e'\star e\equiv a'\cdot b-b'\cdot a\equiv e'\Sigma\, e^T.
\end{equation}
The right-most form contains the symplectic matrix,
\begin{equation}
  \label{eq:symplectic-matrix}
  \Sigma\equiv \Sigma_n=\left(
    \begin{array}[c]{cc}
      &I_n\\ -I_n& 
    \end{array}\right),
\end{equation}
with $I_n$ an $n\times n$ identity matrix.  For a row vector
$e\in\mathbb{F}^{2n}$, the (symplectic) conjugate is
$\tilde{e}= e\,\Sigma^T=-e\,\Sigma$, so that the symplectic product
can be also written as $e'\star e={e}'\,\tilde{e}^T$.  The code
orthogonal with respect to the symplectic product to a given $q$-ary
code $\mathcal{C}\subseteq F^{2n}$ is denoted
$\mathcal{C}^{\perp\star}$.  A code $\mathcal{C}_G^{\perp\star}$
orthogonal to $\mathcal{C}_G$ has generator
matrix $ G^\star$, a (symplectic) parity check matrix of
the original code $\mathcal{C}_G$ and also a Euclidean dual of the
matrix $\tilde{G}=-G\Sigma$, see Eq.~(\ref{eq:dual-matrix}), except
that the code length here is $2n$.  Explicitly, for a generator matrix
in the block form $G=(A|B)$, where each block has $n$ columns, rows of
$G^\star$ are orthogonal to the rows of $\tilde{G}=(B|-\!A)$,
$\tilde{G}(G^\star)^T=0$.

\subsection{Quantum stabilizer codes over qudits}
\label{sec:background-qecc}

A single qudit is an isolated quantum-mechanical system whose pure
states are described by vectors $\ket\psi$ in a $q$-dimensional
Hilbert space ${\cal H}_q$.  Pure states of $n$ qudits are described by
vectors in the Hilbert space ${\cal H}_q^{\otimes n}$, the tensor
product of $n$ single-qudit spaces.  The corresponding physical
observables are described by Hermitian operators acting in
${\cal H}_q^{\otimes n}$.  An $n$-qudit quantum error-correcting code
${\cal Q}$ with parameters $((n,K))_q$ is a $K$-dimensional subspace
of ${\cal H}_q^{\otimes n}$.

When $q=p^m$ is a power of a prime, there is a particularly nice basis
for single-qudit operators acting in ${\cal H}_q$.  Following
Ref.~\onlinecite{Ketkar-Klappenecker-Kumar-Sarvepalli-2006},  choose
$q$ orthonormal basis vectors $\ket{z}\in{\cal H}_q$, $z\in F$, enumerated
by elements of the finite field $F\equiv \F_q$.  Two kinds of unitary
operators, $\hat X(a)$ and $\hat Z(a)$, $a\in F$, also enumerated by
elements of the field, are defined in terms of their action on the
basis vectors,
\begin{equation}
  \hat X(a)\ket{z}=\ket{z+a},\quad \hat Z(b) \ket{z} = \omega^{ \tr(bz) }\ket{z},
  \label{eq:X-Z-operators}
\end{equation}
where, with $q=p^m$ a prime power,
\begin{equation}
  \tr(x) \equiv \tr_{F/\mathbb{F}_p}(x)= x+x^p+\ldots +x^{p^{m-1}}
  \label{eq:field-trace}
\end{equation}
is the trace operation from the extension field $F=\mathbb{F}_q$ to
the prime field $\mathbb{F}_p$, and $\omega = e^{ 2 \pi i/p}$ is a
primitive $p$\,th root of unity.  The basis of interest is formed by
the $q^2$ operators $\hat X(a)\hat Z(b)$, $a,b\in F$.

The same operators can be used to construct a basis of operators
acting in an $n$-qudit Hilbert space $\mathcal{H}_q^{\otimes n}$.
Namely, given a $q$-ary vector $a\in F^n$, define the $n$-qudit
operators $\hat X(a)$ and $\hat Z(a)$ as tensor products over
components, e.g.,
$\hat X(a)=\hat X(a_1) \otimes \hat X(a_2)\otimes \ldots \otimes \hat
X(a_n)$.  These operators generate the $n$-qudit Pauli group%
\begin{equation}
  \mathcal{P}_n=\left\{ \omega^c \hat X(a)\hat Z(b)|c\in \mathbb{F}_p,\,
    a,b\in F^n\right\}. 
  \label{eq:Pauli-group}
\end{equation}
The weight $\wgt(\hat{U})$ of
an operator $\hat U\in \mathcal{P}_n$ is defined as the number of
qudits that $\hat U$ acts upon non-trivially.
Up to a phase, a Pauli operator
$\hat U(a,b;c)\equiv \omega^c\hat X(a)\hat Z(b)$ can be specified by
the vector $e\equiv (a|b)\in \mathbb{F}_q^{2n}$.  The commutation
relation between two such operators (with inessential phase factors
suppressed) reads
\begin{equation}
  \label{eq:commutation}
  \hat{U}(a,b)\hat{U}(a',b')  =\omega^{\tr(a \cdot b'-b\cdot a')} \hat{U}(a',b') \hat{U}(a,b). 
\end{equation}
In particular, the two operators commute if and only if the trace
symplectic form $ \tr(a \cdot b'- b\cdot a')$ vanishes.

An $n$-qudit
stabilizer code is a common $+1$ eigenspace of all operators in a
\emph{stabilizer} group $\mathcal{S}$,%
\begin{equation}
  \label{eq:stabilizer-code}
  \mathcal{Q}\equiv \mathcal{Q}_{\cal S}=\left\{\ket\psi\in {\cal H}_q^{\otimes n}\,\Bigl\vert\,
    \hat U\ket\psi=\ket\psi,\forall \hat U\in\mathcal{S}\right\},
\end{equation}
where $\mathcal{S}$ is an abelian subgroup of $\mathcal{P}_n$ whose
only zero-weight member is the identity operator.  It is easy to see
that any Pauli operator $\hat E$ which does not commute with an
element of the stabilizer throws the code $\mathcal{Q}_\mathcal{S}$
into an orthogonal space $\hat{E} \mathcal{Q}_\mathcal{S}$; such
operators are called \emph{detectable} errors.  \emph{Undetectable}
errors commute with all elements of $\mathcal{S}$.  In particular, all
elements of $\mathcal{S}$ are undetectable.  However, since these
operators act trivially in the code, such errors can be ignored.  Only
undetectable errors outside of $\mathcal{S}$ (up to a phase) are
relevant for error correction.  Such errors act non-trivially in the
code and correspond to \emph{logical} operators.  The \emph{distance}
$d$ of a stabilizer code is defined as the minimum weight of an
undetectable Pauli operator not equal (up to a phase) to an element of
$\mathcal{S}$.  Similarly, errors $\hat{E}\in\mathcal{P}_n$ and
$\hat{E}'=\omega^c\hat{S} \hat{E}$ that differ by an element
$\hat{S}\in\mathcal{S}$ of the stabilizer group (again, up to a phase)
are called mutually \emph{degenerate}; for all practical purposes such
errors are equivalent.

Up to the choice of the phases of its generators, a stabilizer group
can be also represented as a length-$2n$ additive code over
$\mathbb{F}_q$, isomorphic to a length-$2nm$ linear code over the
prime field $\mathbb{F}_p$, where $q=p^m$.  The commutation condition
gives an additional requirement that the rows of the generator matrix
be mutually orthogonal with respect to the symplectic trace product.
In general, any element $x\in\mathbb{F}_q$ of an extension of a field
of prime degree $p$ is $p$-periodic with respect to addition,
$p\,x=0$.  Respectively, the size of a stabilizer group is a power of
the prime $p$.  This gives the code dimension
$K=q^n/|\mathcal{S}|=p^s$, which is not necessarily an integral power
of $q$.  Thus, excluding the case of a prime field analyzed in
Ref.~\onlinecite{Ashikhmin-Knill-2001}, a stabilizer code does not
necessarily encodes an integer number of qudits.  The latter condition
is satisfied under an additional constraint, $s\bmod m=0$.

\subsection{$F$-linear quantum codes}

In this work we focus on the special case of $F$-linear length-$2n$
codes formed by vectors of the form $e=(a|b)$, $a,b\in F^n$, and
duality implemented in terms of the Euclidean symplectic product (\ref{eq:symplectic-product}).  Unlike in Eq.~(\ref{eq:commutation}),
there is no field trace in this expression.  Thus, $e\star e'=0$ gives
a sufficient but not a necessary condition for the Pauli operators
$\hat{U}(e)$ and $\hat{U}(e')$ to commute, unless $q$ is a prime.
Such an approach follows the definition of CSS codes in
Ref.~\onlinecite{Ketkar-Klappenecker-Kumar-Sarvepalli-2006}.
Alternatively, many of the same results can be obtained by classifying
generators in terms of a lifted Pauli group as suggested by
Gottesman\cite{Gottesman-prime-power-2014}.

Degeneracy is the key difference of quantum codes from their classical
counterparts.  Two vectors $e$ and $e'$ in $F^{2n}$ are called
\emph{degenerate} with respect to elements of the $F$-linear code
$\mathcal{C}_G$ generated by an $r\times 2n$ matrix $G$ iff there
exists an $\alpha\in F^r$ such that $e'=e+ \alpha G$.  Degeneracy with
respect to $\mathcal{C}_G$ is denoted $e'\stackrel{G}\simeq e$, where
the generating matrix may be omitted if the meaning is clear from
context.

In the simplest case rows of the generator matrix $H=(A|B)$ (here and
below denoted as $H$ to indicate that orthogonality is expected) are
mutually orthogonal with respect to the symplectic product,%
\begin{equation}
H\tilde{H}^T\equiv H\Sigma H^T=AB^T-BA^T=0,\label{eq:symplectic-orthogonality}
\end{equation}
which is equivalent to
$\mathcal{C}_H\subseteq \mathcal{C}_H^{\perp\star}$.  The space
$\mathcal{C}_H$ is readily seen as the symplectic map of a stabilizer
group acting in $\mathcal{H}_q^{\otimes n}$.  The corresponding dual
code $\mathcal{C}_H^{\perp\star}$, with any pair of vectors degenerate
with respect to $\mathcal{C}_H$ identified, is called an $F$-linear
\emph{stabilizer code}.  The same object is also known as the quotient
 space $\mathcal{C}_H^{\perp\star}/\mathcal{C}_H$.

 Given any set of $(m\rank H)$ additively independent basis vectors of
 $\mathcal{C}_H$, a stabilizer group
 $\mathcal{S}\subseteq \mathcal{P}_n$ can be constructed by assigning
 each generator a phase $c\in \mathbb{F}_p$.  With this map, vectors
 in $\mathcal{C}_H^{\perp\star}$ correspond (up to a phase) to
 undetectable Pauli errors, i.e., operators acting in the space
 $\mathcal{Q}_\mathcal{S}\subseteq \mathcal{H}_q^{\otimes n}$
 stabilized by $\mathcal{S}$.  Stabilizer group being abelian, it is a
 subgroup of the group $\mathcal{L}_\mathcal{S}$ of all undetectable
 Pauli errors acting in $\mathcal{H}_q^{\otimes n}$.  Thus, mutually
 non-degenerate logical operators are classified by elements of the
 quotient group $\mathcal{L}_\mathcal{S}/\mathcal{S}$.  If we ignore
 the phases, then this group is isomorphic to the $F$-linear
 stabilizer code $\mathcal{C}_H^{\perp\star}/\mathcal{C}_H$.  Notice
 that the subspace
 $\mathcal{Q}_\mathcal{S} \subseteq \mathcal{H}_q^{\otimes n}$ is also
 called a stabilizer code, but this should not cause a confusion as we
 will exclusively use the former meaning.

For an $F$-linear stabilizer code based on the generator matrix $H$,
any codeword $c$ satisfies $\tilde{H}c^T=0$, see
Eq.~(\ref{eq:symplectic-orthogonality}); equivalent codewords are
mutually degenerate, $c'\stackrel{H}\simeq c$.  Using
orthogonalization, we can construct $k=n-\rank H$ pairs of canonically
conjugated codewords $c_i$, $c_i'$ such that
$c_i\star c_j'=\delta_{ij}$, $i,j\le k$.  Equivalently, we can
construct a \emph{logical generator} matrix $L$ whose rows are
orthogonal to those of $\tilde{H}$, $\tilde{H}L^T=0$, are linearly
independent from rows of $H$, and, in addition,
 \begin{equation}
   L\Sigma_n L^T=\Sigma_k.\label{eq:logical-matrix}
\end{equation}

More generally, with $\tilde{G}G^T$ not necessarily zero,
$\mathcal{C}_G$-degeneracy classes of different vectors in
$\mathcal{C}_G^{\perp\star}$ correspond to an $F$-linear
\emph{subsystem code}, a generalization of qubit subsystem
codes\cite{Poulin-subs-2005,Bacon-subs-2006}.  Elements of
$\mathcal{C}_G$ form a symplectic map of subsystem code's \emph{gauge
  group}, while vectors $c\in\mathcal{C}_G^{\perp\star}$ correspond to
\emph{bare} logical operators.  Multiplication of a bare logical
operator $\hat U(c)$ by an element of the gauge group gives a
\emph{dressed} logical operator; with the symplectic map this
corresponds to adding a linear combination of the rows of $G$.
Nonequivalent logical operators in $\mathcal{P}_n$ map to vectors in
$F^{2n}$ which are not degenerate with respect to ${\cal C}_G$,
$c'{\not\simeq} c$.

A subsystem code can also be defined in terms of a stabilizer code
whose stabilizer group maps to the space
$\mathcal{C}_{H}\equiv \mathcal{C}_G\cap \mathcal{C}_{G}^{\perp\star}$
of dimension $r=\rank G-2\kappa$, where
$2\kappa=\rank (G\tilde{G}^T)$.  The space $\mathcal{C}_H$ is
generated by code's \emph{stabilizer generator} matrix $H$ whose rows
are linear combinations of the rows of $G$, and also $\tilde{G}H^T=0$.
The corresponding orthogonal space $\mathcal{C}_H^{\perp\star}$
contains $k+\kappa=n-r$ canonically conjugated vector pairs, including
$\kappa$ such pairs in $\mathcal{C}_G$ (these correspond to gauge
qudits) and $k$ pairs in
$\mathcal{C}_G^{\perp\star}\setminus \mathcal{C}_G$ corresponding to
logical operators of the data qudits.

In the following, we will be mostly interested in CSS
codes\cite{Ketkar-Klappenecker-Kumar-Sarvepalli-2006}, a special class
of $F$-linear subsystem (or stabilizer) codes whose generator matrices
can be chosen in a block-diagonal form, $G=\diag(G_X,G_Z)$, with each
block containing $n$ columns.  The corresponding stabilizer generator
matrix also has a block form, $H=\diag(H_X,H_Z)$; the symplectic
orthogonality is equivalent to $G_XH_Z^T=0$ and $G_ZH_X^T=0$.  Such a
code, denoted $\css(G_X,G_Z)$, is a direct sum of an $X$- and a
$Z$-like codes,
\begin{equation}
  \css(G_X,G_Z)=\mathcal{C}_X\oplus\mathcal{C}_Z=
  \mathcal{C}_{H_Z}^\perp/\mathcal{C}_{G_X}\oplus
  \mathcal{C}_{H_X}^\perp/\mathcal{C}_{G_Z},   
  \label{eq:css-code-decomp}
\end{equation}
where each term in the right-hand side (r.h.s.) is a quotient of two linear spaces.
Clearly, the spaces $\mathcal{C}_X$ and $\mathcal{C}_Z$ are identical
to those in \emph{gauge-fixed} stabilizer codes with generator
matrices $H_1=\diag (G_X,H_Z)$ and $H_2= \diag (H_X,G_Z)$,
respectively.  Gauge generator matrix contains $\kappa$ conjugate
vector pairs not in $\mathcal{C}_H$, thus $\rank G_X=\rank H_X+\kappa$
and $\rank G_Z=\rank H_Z+\kappa$.  As a result, both codes in the
r.h.s.\ of Eq.~(\ref{eq:css-code-decomp}) contain
$k=n-\rank H_X-\rank G_Z$ inequivalent vectors.  The distances of the
two codes are
\begin{equation}
d_X=\min_{x\in \mathcal{C}_{H_Z}^\perp \setminus \mathcal{C}_{G_X}}
\wgt(x), \;
d_Z=\min_{x\in \mathcal{C}_{H_X}^\perp \setminus \mathcal{C}_{G_Z}}
\wgt(x).\label{eq:dX}
\end{equation}
Any $k$ inequivalent codewords from $\mathcal{C}_X$ can be chosen to
form the rows of a logical generator matrix $L_X$; in general
$L_XH_Z^T=0$.  However, it is convenient to choose bare codewords for
the basis, so that also $L_X G_Z^T=0$.  Using bare codewords for the
basis of the logical generator matrix of the other code, $L_Z$, this
matrix will satisfy $L_ZG_X^T=0$.  In addition, choosing conjugate
vector pairs for the two bases, we can also ensure
\begin{equation}
L_XL_Z^T=I_k;\label{eq:cw-orthogonality}
\end{equation}
with the full-code logical generator matrix in the block-diagonal
form, $L=\diag(L_X,L_Z)$. This is the CSS form of Eq.\
(\ref{eq:logical-matrix}).  Parameters of such a CSS code are denoted
as $[[n,k,(d_X,d_Z)]]_q$, where the usual code distance is given by the minimum, 
$d=\min(d_X,d_Z)$.

\subsection{Chain complex of $F$-linear spaces.}

Generally, a \emph{chain complex} is a sequence of abelian groups and
a sequence of homomorphisms (boundary operators) between pairs of
consecutive groups such that the image of each homomorphism be
included in the kernel of the next.  Here we will be concerned with
the special case of chain complexes of finite-dimensional vector
spaces $\ldots,\mathcal{A}_{j-1}, \mathcal{A}_j,\ldots$ over a finite
field $F=\mathbb{F}_q$, where $q=p^m$ is a power of a prime $p$.  In
this case the boundary operators are linear transformations
$\partial_j:{\cal A}_{j-1}\leftarrow {\cal A}_j $ that map between
each pair of neighboring spaces, with the requirement
$\partial_j\partial_{j+1}=0$, $j\in\mathbb{Z}$.  We define an
$\ell$-complex ${\cal A}\equiv {\cal K}(A_1,\ldots,A_\ell)$, a bounded
chain complex which only contains $\ell+1$ non-trivial spaces with
fixed bases, in terms of $n_{j-1}\times n_j$ matrices $A_j$ with
elements from $F$ serving as the boundary operators,
$j\in\{1,\ldots,\ell\}$:
\begin{equation}
  \label{eq:chain-complex}
  {\cal A}:\;
  \ldots \leftarrow\{0\}\stackrel{\partial_0}\leftarrow {\cal
    A}_0\stackrel{A_1}\leftarrow  
  {\cal A}_1\ldots \stackrel{A_{\ell}}\leftarrow
  {\cal A}_{\ell}\stackrel{\partial_{\ell+1}}\leftarrow
  \{0\} \ldots
\end{equation}
Here the neighboring matrices must be mutually orthogonal,
$A_{j-1}A_{j}=0$, $j\in\{2,\ldots,\ell\}$.  In addition to boundary
operators given by the matrices $A_j$, implicit are the trivial
operators $\partial_0:\{0\}\leftarrow {\cal A}_0$
and $\partial_{\ell+1}: {\cal A}_\ell\leftarrow \{0\}$ (with the image
being the zero vector in ${\cal A}_\ell$)
treated formally as rank-zero $0\times n_0$ and $n_\ell\times 0$
matrices.

Elements of the subspace $\im (\partial_{j+1})\subseteq {\cal A}_j$
are called boundaries; in our case these are linear combinations of
columns of $A_{j+1}$ and, therefore, form a binary linear code with
the generator matrix $A_{j+1}^T$,
$\im (A_{j+1})=\mathcal{C}_{A_{j+1}^T}$.  In the singular case
$j=\ell$, $\im(\partial_{\ell+1})=\{0\}$, a trivial vector space.
Elements of $\ker(\partial_j)\subset {\cal A}_j$ are called cycles; in
our case these are vectors in a binary linear code with the parity
check matrix $A_j$, $\ker (A_j)=\mathcal{C}_{A_j}^\perp$.  In the
singular case $j=0$, $\ker(\partial_0)={\cal A}_0$, the entire space.

Because of the orthogonality $\partial_j\partial_{j+1}=0$, all
boundaries are necessarily cycles,
$\im(\partial_{j+1})\subseteq \ker(\partial_j) \subseteq {\cal A}_j$.
The structure of the cycles in ${\cal A}_j$ that are not boundaries is
described by the $j$\,th homology group,
\begin{equation}
  {H}_j({\cal A})\equiv H(A_j,A_{j+1})=
  \ker(A_{j})/\im(A_{j+1}).
\label{eq:homo-group}
\end{equation}
Group quotient here means that two cycles [elements of $\ker (A_j)$]
that differ by a boundary [element of $\im(A_{j+1})$] are considered
equivalent; non-zero elements of $\mathcal{H}_j(\mathcal{A})$ are
equivalence classes of homologically non-trivial cycles.  Explicitly,
the equivalence of $x$ and $y$ in $\mathcal{A}_j$ implies that for
some $\alpha\in {\cal A}_{j+1}$, $y=x+ \alpha A_{j+1}^T$.  The rank of
$j$-th homology group is the dimension of the corresponding vector
space; one has
\begin{equation}
  \label{eq:homo-rank}
  k_j\equiv \rank H_j(\mathcal{A})=n_j-\rank A_j-\rank A_{j+1}  .
\end{equation}
The homological \emph{distance} $d_j$ is the minimum Hamming weight of
a non-trivial element (any representative) in the homology group
$H_j(\mathcal{A})\equiv H(A_j,A_{j+1})$,
\begin{equation}
  d_j=\min_{ 0\not\simeq x\in H_j(\mathcal{A})} \wgt x
  =\min_{x\in \ker({A}_j)\setminus \im(A_{j+1})}\wgt x.
  \label{eq:homo-distance}
\end{equation}
By this definition, $d_j\ge1$.  To address singular cases, throughout
this work we define the minimum of an empty set as infinity; $k_j=0$
is always equivalent to $d_j=\infty$.  In particular, the distance of
the homology group $H_0(\mathcal{A})$ is $d_0=1$, unless $A_1$ has
full row rank, giving $k_0=0$, in which case we get $d_0=\infty$.  In
the case of the homology group $H_{\ell}(\mathcal{A})$, the distance
$d_{\ell}$ is that of the $F$-linear  code
$\mathcal{C}^\perp_{A_\ell}$.  Again, we get $d_\ell=\infty$ if
$k_\ell=0$, which happens when $A_\ell$ has full column rank.

In addition to the homology group $H(A_j,A_{j+1})$, there is also a
\emph{co-homology} group
$\tilde{H}_j(\tilde{\cal A})=H(A_{j+1}^T,A_j^T)$ of the same rank
(\ref{eq:homo-rank}); this is associated with the \emph{co-chain
  complex} $\tilde{\cal A}$ formed from the transposed matrices
$A_j^T$ taken in the opposite order.  A quantum CSS code
with generator
matrices $G_X=A_j$ and $G_Z=A_{j+1}^T$ is isomorphic with the direct
sum of the groups $H_j$ and $\tilde{H}_j$, cf.~Eq.~(\ref{eq:css-code-decomp}),
\begin{equation}
  \label{eq:css-code-homology}
\css(A_j,A_{j+1}^T)\cong H(A_j,A_{j+1})\oplus H(A_{j+1}^T,A_j^T).
\end{equation}
The two terms correspond to $Z$ and $X$ logical operators,
respectively.  This gives for the homological distances in the chain
complex and in the co-chain complex, respectively, $d_j=d_Z$ and
$\tilde{d}_j=d_X$.

The tensor product $\mathcal{A}\times \mathcal{B}$ of two chain
complexes $\mathcal{A}$ and $\mathcal{B}$ is defined as the chain
complex formed by linear spaces decomposed as direct sums of Kronecker
products,%
\begin{equation}
  (\mathcal{A}\times \mathcal{B})_j=\bigoplus\nolimits_{i\in\mathbb{Z}}\mathcal{A}_i \otimes
  \mathcal{B}_{j-i},\label{eq:tp-spaces}
\end{equation}
with the action of the boundary operators 
\begin{equation}
  \label{eq:tp-boundary}
  \partial'''(a\otimes b)\equiv\partial' a\otimes b+(-1)^i a\otimes
  \partial'' b, 
\end{equation}
where $a\in\mathcal{A}_i$, $b\in\mathcal{B}_{j-i}$, and the boundary
operators $\partial'$, $\partial''$, and $\partial'''$ act in
complexes $\mathcal{A}$, $\mathcal{B}$, and
$\mathcal{A}\times \mathcal{B}$, respectively.  Notice that the two
terms in Eq.~(\ref{eq:tp-boundary}) are supported in different
subspaces of the expansion (\ref{eq:tp-spaces}).  When both
$\mathcal{A}$ and $\mathcal{B}$ are \emph{bounded}, that is, they
include finite numbers of non-trivial spaces, the dimension
$n_j(\mathcal{C})$ of a space $\mathcal{C}_j$ in the product
$\mathcal{C}=\mathcal{A}\times \mathcal{B}$ is%
\begin{equation}
  \label{eq:tp-nj}
n_j (\mathcal{C})=\sum\nolimits_{i} n_i (\mathcal{A}) \,n_{j-i}
  (\mathcal{B}).
\end{equation}
The homology groups of the product $\mathcal{C}={\cal A}\times {\cal B}$ are
isomorphic to a simple expansion in terms of those of ${\cal A}$ and
${\cal B}$ which is given by the K\"unneth formula,
\begin{equation}
  \label{eq:tp-Kunneth}
  H_j (\mathcal{C})\cong\bigoplus\nolimits_{i} H_i (\mathcal{A})\,
  \otimes\,H_{j-i} 
  (\mathcal{B}).
\end{equation}
One immediate consequence is that the rank $k_j(\mathcal{C})$ of the
$j$\,th homology group $H_j(\mathcal{C})$ is 
\begin{equation}
  \label{eq:tp-kj}
  k_j (\mathcal{C})=\sum\nolimits_{i} k_i (\mathcal{A}) \,k_{j-i}
  (\mathcal{B}).
\end{equation}
Such a convolution can be also written as a product of the Poincare
polynomials $p_\mathcal{A}(x)\equiv \sum_{j}k_j(\mathcal{A}) x^j$
corresponding to the two complexes,
$p_\mathcal{C}(x)=p_\mathcal{A}(x) p_\mathcal{B}(x)$.

\section{Minimal distances of certain $F$-linear CSS codes}
\label{sec:product-codes}

\subsection{Subsystem product codes and their gauge-fixed versions}
Our main tool is the map (\ref{eq:css-code-homology}) between a CSS
code and the homology groups of associated chain and co-chain
complexes.  In this section we derive the minimum distances of several
classes of CSS codes which are relevant for the analysis of the
homological distances in the tensor products of chain complexes.
Although the derivations are not technically hard, these results may
be of independent value.

The distance bounds are constructed using the following two Lemmas
which, in turn, follow from Eq.~(\ref{eq:dual-shortening}) and the
fact that for any CSS stabilizer code $ \css(H_X,H_Z)$ with logical
generator matrix $L=\diag(L_X,L_Z)$, the dual code
$\mathcal{C}_{H_X}^\perp$ coincides with the space generated by the
combined rows of $H_Z$ and $L_Z$, while $\mathcal{C}_{H_Z}^\perp$
coincides with the space generated by rows of $H_X$ and $L_X$
combined.
\begin{lemma}[$Z$-puncturing bound]
  \label{th:punctured} Consider a stabilizer code
  $\mathcal{Q}=\css(H_X,H_Z)$ with paratemeters $[[n,k,(d_X,d_Z)]]_q$
  and a qudit index set $V=\{1,2,\ldots,n\}$.  Given a partition into
  complementary sets $I\subset V$ and $J=V\setminus I$, suppose a
  logical generator matrix $L_X$ can be chosen so that none of its $k$ rows
  is supported both in $I$ and in $J$.  Let
  $Q'=\css\biglb((H_X)_I,H_Z[I]\bigrb)$ 
  and
  $Q''=\css\biglb((H_X)_J,H_Z[J]\bigrb)$ 
  be the codes whose $X$ generator matrices are shortened and $Z$
  generator matrices punctured to $I$ and $J$, respectively.  Then,
  the $Z$-distances of the three codes satisfy the inequality
  $d_Z\ge \min(d_Z',d_Z'')$.
\end{lemma}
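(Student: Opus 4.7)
The plan is to show that every nontrivial $Z$-logical $x \in \mathcal{C}_{H_X}^\perp \setminus \mathcal{C}_{H_Z}$ of $\mathcal{Q}$ restricts on at least one of $I$ or $J$ to a nontrivial $Z$-logical of the corresponding smaller code $Q'$ or $Q''$. Since the supports are disjoint, $\wgt(x) = \wgt(x[I]) + \wgt(x[J])$, so this immediately yields $d_Z \ge \min(d_Z',d_Z'')$.

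As a first step I would verify the easy containments $x[I] \in \mathcal{C}_{(H_X)_I}^\perp$ and $x[J] \in \mathcal{C}_{(H_X)_J}^\perp$. Each row of $(H_X)_I$ has the form $\alpha H_X[I]$ for some $\alpha$ with $\alpha H_X[J]=0$; pairing with $x$ and using $x\in\mathcal{C}_{H_X}^\perp$ then forces $x[I]\cdot(\alpha H_X[I])^T = 0$. The same observation with rows of $H_Z$ in place of $x$ confirms the CSS orthogonality $(H_X)_I\,H_Z[I]^T = 0$ required for $Q'$ to be a legitimate code (similarly for $Q''$).

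The main step is the contrapositive: if both $x[I] \in \mathcal{C}_{H_Z[I]}$ and $x[J] \in \mathcal{C}_{H_Z[J]}$, then $x \in \mathcal{C}_{H_Z}$. Writing $x[I]=\alpha H_Z[I]$, the shift $x' \equiv x - \alpha H_Z$ is supported in $J$ alone and still satisfies $x'[J] \in \mathcal{C}_{H_Z[J]}$. Writing $x'[J]=\beta H_Z[J]$ and setting $w \equiv x' - \beta H_Z$, we obtain $w \in \mathcal{C}_{H_X}^\perp$ supported entirely in $I$ with $w[I]=-\beta H_Z[I]$. Using the decomposition $\mathcal{C}_{H_X}^\perp = \mathcal{C}_{H_Z}+\mathcal{C}_{L_Z}$ together with Eq.~(\ref{eq:cw-orthogonality}), the claim $w \in \mathcal{C}_{H_Z}$ reduces to proving $L_X\,w^T = 0$.

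Verifying this identity is the only place the row-support hypothesis on $L_X$ enters, and I expect it to be the main obstacle only in a bookkeeping sense. Split $L_X$ into row blocks $L_X^I$ (rows supported in $I$) and $L_X^J$ (rows supported in $J$). Since $w$ lives in $I$, the block $L_X^J$ contributes nothing because its rows vanish on $I$. For the block $L_X^I$, combine the CSS commutation condition $L_X H_Z^T = 0$ with $L_X^I[J]=0$ to deduce $L_X^I[I]\,H_Z[I]^T = 0$, which immediately forces $L_X^I w^T = -L_X^I[I]\,H_Z[I]^T\beta^T = 0$. Hence $w \in \mathcal{C}_{H_Z}$ and therefore $x \in \mathcal{C}_{H_Z}$, completing the contrapositive.
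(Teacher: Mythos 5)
Your proof is correct and follows the same essential strategy as the paper's: decompose the support of a nontrivial $Z$-logical $x$ across $I$ and $J$, show at least one piece must be a nontrivial $Z$-logical in the corresponding smaller code, and conclude from $\wgt(x)=\wgt(x[I])+\wgt(x[J])$. The only organizational difference is that the paper argues directly (since $L_X x^T\neq0$ and each row of $L_X$ is supported entirely in $I$ or in $J$, the pairing must be detected on at least one side), whereas you argue the contrapositive by successively shifting $x$ by rows of $H_Z$ down to a vector $w$ supported in $I$ alone and then showing $L_X w^T=0$. Both hinge on the same key observation---the row-support hypothesis on $L_X$ forces $L_X^I[I]\,H_Z[I]^T=0$---which the paper uses implicitly and you spell out explicitly.
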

\begin{proof}
  The case $k=0$ is trivial since it gives infinite $d_Z$; assume
  $k>0$.  The distance $d_Z$ of the code is the minimum weight in the
  set ${Q}_Z=\mathcal{C}_{H_X}^\perp\setminus \mathcal{C}_{H_Z}$ of
  all non-trivial $Z$-like codewords and their equivalent vectors.
  For any $c\in Q_Z$, the punctured vectors $c[I]$ and $c[J]$ are
  orthogonal to the rows of $(H_X)_I$ and $(H_X)_J$, respectively; the
  corresponding Pauli errors are undetectable.  Further, since
  $L_X c^T\neq0$, it is impossible that $c[I]$ be orthogonal to the
  rows of $(L_X)_I=L_X[I]$ and at the same time $c[J]$ be orthogonal
  to the rows of $(L_X)_J=L_X[J]$.  Therefore, at most one of the
  vectors $c[I]$ and $c[J]$ can be trivial in the corresponding code.

  Now, consider the identity $\wgt c[I] + \wgt c[J] = \wgt c > 0$. The
  punctured pieces $c[I]$ and $c[J]$ contribute to the distances
  $d_Z'$ and $d_Z''$ respectively only if the corresponding vectors
  are non-trivial. Let $d(c)$ equal infinity if $ c$ is trivial in
  $\mathcal{Q}$, and $\wgt c \ge 1$ otherwise, and define similar
  functions $d'(c)$ and $d''(c)$ for vectors corresponding to
  undetectable errors in $Q'$ and $Q''$, respectively.  Then,
  $d_Z'\le \min_{c\in\mathcal{Q}_Z} \mathop{d'}(c[I])$ and
  $d_Z''\le \min_{c\in\mathcal{Q}_Z} \mathop{d''}(c[J])$.  The stated
  result is obtained by minimizing the inequality
  $\min\biglb( \mathop{d'}(c[I]), \mathop{d''}(c[J])\bigrb)\le d(c)$
  over all $c \in Q_Z$.
\end{proof}

\begin{lemma}[$Z$-shortening bound]
  \label{th:shortened}
  Consider a stabilizer code $\mathcal{Q}=\css(H_X,H_Z)$ with the set
  $V$ indexing its variable nodes.  For any index set $I\subset V$,
  let $Q'=\css\biglb(H_X[I],(H_Z)_I\bigrb)$ be the code whose $X$
  generator matrix is punctured and $Z$ generator matrices shortened
  to $I$.  Then (i) the $Z$-distances of the original code does not
  exceed that of $Q'$, $d_Z\le d_Z'$.  (ii) This inequality is
  saturated if the support of a minimum-weight codeword in $Q_Z$ is
  contained in $I$.
\end{lemma}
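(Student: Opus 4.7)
The plan is to use the duality (\ref{eq:dual-shortening}) between puncturing and shortening to identify the $Z$-codewords of $Q'$ with the subset of $Z$-codewords of $\mathcal{Q}$ supported in $I$; both bounds then reduce to a weight-preserving lift/restriction.

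First I would translate the defining spaces of $Q'$. Since $H_X$ and any generator matrix of $\mathcal{C}_{H_X}^\perp$ are mutually dual, Eq.~(\ref{eq:dual-shortening}) gives that $H_X[I]$ and $(H_X^\perp)_I$ are mutually dual as well, so
\begin{equation}
\mathcal{C}_{H_X[I]}^\perp = (\mathcal{C}_{H_X}^\perp)_{\rm s}[I],\qquad \mathcal{C}_{(H_Z)_I}=(\mathcal{C}_{H_Z})_{\rm s}[I],
\end{equation}
the second identity following directly from the definition of shortening. Consequently, a vector $x\in F^{|I|}$ lies in $\mathcal{Q}'_Z\equiv \mathcal{C}_{H_X[I]}^\perp\setminus \mathcal{C}_{(H_Z)_I}$ iff it admits a unique lift $c\in F^n$ with $\supp(c)\subseteq I$, $c[I]=x$, $c\in \mathcal{C}_{H_X}^\perp$, and $c\notin \mathcal{C}_{H_Z}$. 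The last equivalence uses that any $c\in\mathcal{C}_{H_Z}$ with support in $I$ satisfies $c[I]\in (\mathcal{C}_{H_Z})_{\rm s}[I]=\mathcal{C}_{(H_Z)_I}$, so the contrapositive rules out $c\in\mathcal{C}_{H_Z}$ whenever $x\notin\mathcal{C}_{(H_Z)_I}$.

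With this dictionary, (i) is immediate: take $x\in\mathcal{Q}'_Z$ of minimum weight $d_Z'$, lift to $c\in\mathcal{Q}_Z$ with $c[I]=x$; because $\supp(c)\subseteq I$ one has $\wgt(c)=\wgt(x)=d_Z'$, hence $d_Z\le d_Z'$. For (ii), start from a minimum-weight $c\in\mathcal{Q}_Z$ with $\supp(c)\subseteq I$ and restrict to $x\equiv c[I]$. Then $x\in (\mathcal{C}_{H_X}^\perp)_{\rm s}[I]=\mathcal{C}_{H_X[I]}^\perp$; if $x$ were in $\mathcal{C}_{(H_Z)_I}$, lifting its representative back (which is unique because $\supp(c)\subseteq I$) would place $c$ in $\mathcal{C}_{H_Z}$, contradicting $c\in\mathcal{Q}_Z$. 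Thus $x\in\mathcal{Q}'_Z$ with $\wgt(x)=\wgt(c)=d_Z$, so $d_Z'\le d_Z$, and combined with (i) yields equality.

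There is no real obstacle here; the only bookkeeping is to keep shortening and puncturing straight on each side of the CSS pair and to notice that restricting to $I$ is a weight-preserving bijection between vectors of $F^n$ supported in $I$ and vectors of $F^{|I|}$, making both the lift in (i) and the restriction in (ii) entirely canonical.
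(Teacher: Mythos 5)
Your proof is correct and follows essentially the same route as the paper's: the paper's one-line argument rests on exactly the two facts you establish in detail — that zero-padding gives a weight-preserving injection from $\mathcal{Q}'_Z$ into $\mathcal{Q}_Z$ (for part (i)), and that a codeword of $\mathcal{Q}_Z$ supported in $I$ restricts to a codeword of $\mathcal{Q}'_Z$ of the same weight (for part (ii)). You simply spell out the duality bookkeeping via Eq.~(\ref{eq:dual-shortening}) that the paper leaves implicit.
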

\begin{proof}
  This follows from the facts that (a) any codeword in $Q_Z'$ is also in
  $Q_Z$, and (b) that any codeword in $Q_Z$ which is supported on $I$
  is also in $Q_Z'$.
\end{proof}

We now consider several ``product'' codes related to the
subsystem code ${\cal Q}^{\rm subs}=\css(G_X,G_Z)$ with the gauge generator matrices
\begin{equation}
  \label{eq:gen-proj-subs}
  \strut\!\!\! G_X=\left(
    \begin{array}[c]{c}
      H_X^A\otimes I(n_B)\\
      I(n_A)\otimes H_X^B
    \end{array}
  \right),\;
  G_Z=\left(
    \begin{array}[c]{c}
      H_Z^A\otimes I(n_B)\\
      I(n_A)\otimes H_Z^B
    \end{array}
  \right),\!
\end{equation}
constructed in terms of generator matrices of a pair of stabilizer
codes $Q_A=\css(H_X^A,H_Z^A)$ and $Q_B=\css(H_X^B,H_Z^B)$ with
parameters $[[n_A,k_A,(d_X^A,d_Z^A)]]_q$ and
$[[n_B,k_B,(d_X^B,d_Z^B)]]_q$, respectively.

\begin{lemma}[Subsystem product code]
  \label{th:subs-code-H}
  Denote $L_X^A$, $L_Z^A$ and $L_X^B$, $L_Z^B$ the logical generator
  matrices of the CSS stabilizer codes $\css(H_X^A,H_Z^A)$ and
  $\css(H_X^B,H_Z^B)$, respectively, chosen so that
  \begin{equation}
    L_X^A (L_Z^A)^T=I(k_A),\quad  L_X^B (L_Z^B)^T=I(k_B).
    \label{eq:A-B-logical} 
\end{equation}
Then the \emph{subsystem product code} with CSS gauge generator
matrices (\ref{eq:gen-proj-subs}) has logical generator matrices
  \begin{equation}
    \label{eq:log-proj}
    L_X=L_X^A\otimes L_Z^B,\quad L_Z=L_Z^A\otimes L_Z^B,  
  \end{equation}
  and stabilizer generator matrices
  \begin{equation}
    \label{eq:stab-proj}
    H_X=\left(
      \begin{array}[c]{c}
        H_X^A\otimes H_X^B\\
        H_X^A\otimes L_X^B\\
        L_X^A\otimes H_X^B
      \end{array}
    \right),\quad
    H_Z=\left(
      \begin{array}[c]{c}
        H_Z^A\otimes H_Z^B\\
        H_Z^A\otimes L_Z^B\\
        L_Z^A\otimes H_Z^B
      \end{array}
    \right).
  \end{equation}
\end{lemma}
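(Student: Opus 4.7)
The plan is to verify that the proposed matrices $H_X, H_Z, L_X, L_Z$ satisfy the defining properties of stabilizer and bare-logical generators for the CSS subsystem code $\css(G_X, G_Z)$: (a) rows of $H_X, H_Z$ lie in the rowspans $\mathcal{C}_{G_X}, \mathcal{C}_{G_Z}$; (b) the CSS orthogonality $H_X G_Z^T = H_Z G_X^T = 0$ holds, which places the stabilizer inside $\mathcal{C}_G \cap \mathcal{C}_G^{\perp\star}$; (c) these $H_X, H_Z$ exhaust that intersection (so no stabilizers are missed); and (d) $L_X, L_Z$ are Euclidean-orthogonal to $G_Z, G_X$ respectively, represent $k = k_Ak_B$ inequivalent cosets modulo the stabilizers, and satisfy the canonical pairing $L_X L_Z^T = I_k$.

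To organize the computation I would introduce shorthands $V_X^A = \mathrm{rowspan}(H_X^A)$ and $V_{L_X}^A = \mathrm{rowspan}(L_X^A)$, together with their $Z$- and $B$-analogs. The CSS data on each factor code supplies the identities such as $H_X^A (H_Z^A)^T = L_X^A (H_Z^A)^T = 0$ and $L_X^A (L_Z^A)^T = I_{k_A}$ (with $X\leftrightarrow Z$ swapped versions), along with the direct-sum decompositions $\mathcal{C}_{H_Z^A}^\perp = V_X^A \oplus V_{L_X}^A$ and its analogs. Requirements (a), (b), and the pairing in (d) each reduce, through the Kronecker identity $(M\otimes N)(P\otimes Q)^T = MP^T \otimes NQ^T$, to one-line checks on the factor-code identities; the containment in (a) is seen by expanding one tensor factor of a Kronecker row (such as $(H_X^A)_i\otimes(H_X^B)_j$) in the standard basis, exhibiting it as a linear combination of rows of $G_X$ (or $G_Z$).

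The key step is (c), the rowspan equality $\mathrm{rowspan}(H_X)=\mathcal{C}_{G_X}\cap\mathcal{C}_{G_Z}^\perp$. Here I would apply $(U\otimes\F^{n_B})^\perp = U^\perp \otimes \F^{n_B}$ together with $(U_1 + U_2)^\perp = U_1^\perp \cap U_2^\perp$ to obtain $\mathcal{C}_{G_Z}^\perp = (V_Z^A)^\perp \otimes (V_Z^B)^\perp = (V_X^A \oplus V_{L_X}^A) \otimes (V_X^B \oplus V_{L_X}^B)$, a direct sum of four tensor summands. Intersecting with $\mathcal{C}_{G_X} = V_X^A \otimes \F^{n_B} + \F^{n_A} \otimes V_X^B$ picks out exactly the three summands $V_X^A \otimes V_X^B$, $V_X^A \otimes V_{L_X}^B$, and $V_{L_X}^A \otimes V_X^B$, which are spanned by the three Kronecker-product row groups of the claimed $H_X$; this gives the desired equality by matching dimension and containment. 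The leftover fourth summand has dimension $k_Ak_B$ and is spanned by the rows of $L_X$, confirming (d). The main obstacle I anticipate is careful bookkeeping to confirm that all tensor sums are direct, which reduces to the linear independence of stabilizer and logical generators inside each factor code; once this is settled, the canonical pairing $L_XL_Z^T = I_k$ follows immediately by Kronecker from (\ref{eq:A-B-logical}).
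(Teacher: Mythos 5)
Your proposal is correct and essentially matches the paper's argument: both compute the classical dual of one of the gauge generator matrices as a Kronecker product of factor duals, then split the four resulting tensor summands into the three that lie in the opposing gauge rowspan (forming a stabilizer generator matrix) and the one leftover summand of dimension $k_Ak_B$ (giving the logical generator matrix whose pairing follows from Eq.~(\ref{eq:A-B-logical}) via the Kronecker identity). The paper works from $\mathcal{C}_{G_X}^\perp = \mathrm{rowspan}(P_A\otimes P_B)$ and reads off $H_Z, L_Z$ from its four row blocks; you carry out the symmetric computation starting from $\mathcal{C}_{G_Z}^\perp$, with somewhat more explicit bookkeeping of the direct-sum decomposition and of the intersection with $\mathcal{C}_{G_X}$.
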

\begin{proof}
  Matrices
  \begin{equation}
    \label{eq:2}
    P_A=\left(\begin{array}[c]{c}L_Z^A\\ H_Z^A\end{array}\right),\quad 
    P_B=\left(\begin{array}[c]{c}L_Z^B\\ H_Z^B\end{array}\right)
  \end{equation}
  are the parity check matrices for the classical $F$-linear codes
  with generator matrices $H_X^A$ and $H_X^B$, respectively.  Thus, a
  classical code with generator matrix $G_X$ in
  Eq.~(\ref{eq:gen-proj-subs}) has a parity check matrix
  $P_A\otimes P_B$.  Out of the four row blocks of the latter
  matrix, only rows of $L_Z=L_Z^A\otimes L_Z^B$ are linearly
  independent from the rows of $G_Z$, as can be verified by taking
  scalar products with the rows of $L_X$.  The remaining row blocks
  can be readily seen as linear combinations of the rows of $G_Z$;
  they form the matrix $H_Z$.  The proof for $L_X$ and $H_X$ is
  similar.
\end{proof}

\begin{theorem}[Concatenated-stabilizer CSS code]
  \label{th:concat-stab-code}
  Let $\mathcal{Q}_A$ and $\mathcal{Q}_B$ be two $F$-linear CSS
  stabilizer codes used to define matrices (\ref{eq:gen-proj-subs}),
  with logical generator matrices (\ref{eq:A-B-logical}).  Use $n_B$
  copies of the code $\mathcal{Q}_A$, with logical operators used as
  qudits for the outer code, to form a \emph{concatenated-stabilizer
    code} $\overline {\cal Q}$ with CSS generator matrices
  \begin{equation}
    \label{eq:concatenated-CSS}
    \overline{H}_X=\left(
      \begin{array}[c]{c} H_X^A\otimes I(n_B)\\ L_X^A\otimes H_X^B        
      \end{array}\right),\;
    \overline{H}_Z=\left(
      \begin{array}[c]{c} H_Z^A\otimes I(n_B)\\ L_Z^A\otimes H_Z^B        
      \end{array}\right).
  \end{equation}
  The logical generator matrices of thus constructed code are given by
  Eq.~(\ref{eq:log-proj}), and the parameters are given by the
  corresponding products
  $[[n_An_B, k_Ak_B, (d_X^Ad_X^B,d_Z^A d_Z^B)]]_q$.
\end{theorem}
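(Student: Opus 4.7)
I would organize the proof into four pieces: (a) CSS orthogonality $\overline H_X\overline H_Z^T=0$; (b) verification of the logical generators $L_X=L_X^A\otimes L_X^B$ and $L_Z=L_Z^A\otimes L_Z^B$; (c) the dimension count $k=k_Ak_B$; and (d) the distance identities $d_X=d_X^A d_X^B$, $d_Z=d_Z^A d_Z^B$. Parts (a)--(c) amount to Kronecker-product bookkeeping on the CSS orthogonality relations already available for $\mathcal{Q}_A$ and $\mathcal{Q}_B$; the real content is (d).

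For (a) and (b), each relevant block of $\overline H_X\overline H_Z^T$, $L_X\overline H_Z^T$, and $L_Z\overline H_X^T$ factors by the mixed-product rule into a Kronecker product in which at least one factor vanishes via $H_X^A(H_Z^A)^T=0$, $H_X^A(L_Z^A)^T=0$, $L_X^A(H_Z^A)^T=0$, or $H_X^B(H_Z^B)^T=0$, and the canonical pairing $L_X L_Z^T = L_X^A(L_Z^A)^T\otimes L_X^B(L_Z^B)^T = I_{k_Ak_B}$ delivers Eq.~(\ref{eq:cw-orthogonality}) while forcing linear independence of $L_X$, $L_Z$ from the stabilizer rows. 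For (c) I would show the two row blocks of $\overline H_X$ are jointly independent: left-multiplying a vanishing combination by $L_Z^A$ kills the $H_X^A\otimes I$ piece and reduces it to $B_\beta H_X^B=0$, hence $B_\beta=0$ and then $A_\alpha=0$ by the assumed full row rank of the $H$-matrices; the count $\rank\overline H_X+\rank\overline H_Z = (n_A-k_A)n_B + k_A(n_B-k_B) = n_An_B-k_Ak_B$ then yields $k=k_Ak_B$.

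The distance upper bound in (d) comes from an explicit construction: given minimum-weight non-trivial $X$-logical representatives $\ell^A\in\mathcal{Q}_A$ and $\ell^B\in\mathcal{Q}_B$ of weights $d_X^A$ and $d_X^B$, the Kronecker product $\ell^A\otimes\ell^B$ has weight $d_X^A d_X^B$, lies in $\mathcal{C}_{\overline H_Z}^\perp$ by the same mixed-product identities, and is non-trivial because its pairing with $L_Z^A\otimes L_Z^B$ is the outer product of the non-vanishing coset components of $\ell^A$ and $\ell^B$, whereas every row of $\overline H_X$ pairs trivially with every row of $L_Z$. Thus $d_X\le d_X^A d_X^B$, and the $Z$-case is symmetric.

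The matching lower bound is the main step and exploits the natural concatenation structure. View any $V\in \mathcal{C}_{\overline H_Z}^\perp$ as an $n_A\times n_B$ matrix over $F$. Orthogonality with $H_Z^A\otimes I_{n_B}$ places each column of $V$ in $\mathrm{span}(H_X^A,L_X^A)$, so $V=(H_X^A)^T A+(L_X^A)^T B$ uniquely; orthogonality with $L_Z^A\otimes H_Z^B$, together with $L_Z^A(L_X^A)^T=I_{k_A}$, forces every row of $B$ into $\mathrm{span}(H_X^B,L_X^B)$, giving $B=CH_X^B+DL_X^B$. The subspace $\mathcal{C}_{\overline H_X}$ is exactly the locus where $D=0$, so a non-trivial $X$-logical has $D\ne 0$. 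Choosing $i_0$ with $D[i_0,\cdot]\ne 0$ makes $B[i_0,\cdot]$ a non-trivial $X$-logical representative of $\mathcal{Q}_B$ of weight $\ge d_X^B$, so $B$ has at least $d_X^B$ non-zero columns $j$; for each such $j$ the column $V[:,j]$ is a non-trivial $X$-logical representative of $\mathcal{Q}_A$ of weight $\ge d_X^A$, and summing gives $\wgt V\ge d_X^A d_X^B$. The main delicacy is the uniqueness of the two decompositions and the identification of $\mathcal{C}_{\overline H_X}$ with the $D=0$ locus, both of which follow from full row rank of the $H$-matrices and the CSS orthogonality relations used throughout.
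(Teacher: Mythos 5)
Your proof is correct, and parts (a)--(c) (CSS orthogonality, the Kronecker-product logical generators, and the rank count $k=k_Ak_B$) follow essentially the same bookkeeping as the paper, as does the upper-bound construction $\ell^A\otimes\ell^B$.

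The lower bound in (d), however, takes a genuinely different route.  The paper's argument is by contradiction: assume a $Z$-logical $c$ of weight $<d_Z^Ad_Z^B$, decompose it as $\sum_j a_j\otimes e_j$, strip away the $a_j$'s that are $Z$-stabilizers of $\mathcal{Q}_A$ (which doesn't change the equivalence class), observe that the surviving support $J$ has $|J|<d_Z^B$, and then invoke the $Z$-shortening bound (Lemma~\ref{th:shortened}) twice to force the shortened code $\mathcal{Q}_B'$ of length $|J|$ to have $k_B'=0$, killing the assumed codeword.  You instead give a direct, non-contradiction argument: view $V$ as an $n_A\times n_B$ matrix, apply the two orthogonality relations sequentially to produce the nested decomposition $V=(H_X^A)^T A+(L_X^A)^T B$ and $B=CH_X^B+DL_X^B$, note that $\mathcal{C}_{\overline H_X}$ is exactly the $D=0$ locus, and then count: a nonzero row $D[i_0,\cdot]$ forces $B[i_0,\cdot]$ to be a nontrivial $X$-logical of $\mathcal{Q}_B$ (so $B$ has $\ge d_X^B$ nonzero columns), and each such column $V[:,j]$ is a nontrivial $X$-logical of $\mathcal{Q}_A$ of weight $\ge d_X^A$.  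What the paper's route buys is a uniform framework (the puncturing/shortening Lemmas~\ref{th:punctured}, \ref{th:shortened}) that is reused throughout Section~\ref{sec:complexes}; what your route buys is that the concatenation structure does all the work transparently, with no appeal to an auxiliary shortening lemma or a contradiction.  Your reliance on ``full row rank of the $H$-matrices'' is harmless (one can always pass to a full-row-rank generating set without changing the code), and you should also dispose of the trivial $k_Ak_B=0$ case up front as the paper does, but neither of these is a real gap.
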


\begin{proof}
  It is easy to check that $\overline{H}_X\overline{H}_Z^T=0$; this is
  a stabilizer code.  Similarly, we get 
  $\overline{H}_XL_Z^T=0$, $\overline{H}_ZL_X^T=0$, 
  $L_XL_Z^T=I(k_A)\otimes I(k_B)$, and the matrix ranks
  \begin{eqnarray}
    \label{eq:rank-concatenated-stab}
    \rank \overline{H}_X&=&\rank H_X^A\, n_B+k_A\rank H_X^B,\\
    \rank \overline{H}_Z&=&\rank H_Z^A\, n_B+k_A\rank H_Z^B,\\
    \rank L_X&=&\rank L_Z=k_Ak_B;
  \end{eqnarray}
  these expressions add up to the code length $n_An_B$.  This verifies
  the CSS construction and the number of encoded qudits $k=k_Ak_B$.
  The case $k=0$ is trivial; in the following, assume $k>0$.  To
  construct the upper distance bounds, e.g., $d_Z\le d_Z^Ad_Z^B$,
  consider pairs of conjugated codewords $a,a'$ and $b,b'$ in
  $\mathcal{Q}_A$ and $\mathcal{Q}_B$, respectively, where $a$ and $b$
  are $Z$-like with $\wgt a=d_Z^A$, $\wgt b=d_Z^B$, and
  $a' a^T=b' b^T=1$.  Then the vector $c=a\otimes b$ of weight
  $d_Z^Ad_Z^B$ satisfies $\overline H_X c^T=0$.  Further, its dual
  $a'\otimes b'$ is orthogonal to the rows of $\overline{H}_Z$, which
  implies that $c$ cannot be a linear combination of the rows of
  $\overline{H}_Z$.  Taken together, this proves
  $c\in \overline{\cal Q}_Z$, thus its weight gives a valid upper
  bound on $d_Z$.

  To construct a matching lower distance bound, assume there is a
  non-trivial codeword $c\in\overline{\cal Q}_Z$ such that $\wgt(c)<d_Z^Ad_Z^B$.
  This implies $\overline{H}_Xc^T=0$, and also that $c$ must be linearly
  independent from the rows of $\overline{H}_Z$.  Let $e_j\in F^{n_B}$,
  $j\in\{1,\ldots, n_B\}$ be vectors with all zero components except a
  one at position $j$.  Consider a decomposition
  \begin{equation}
    \label{eq:tensor-product-decomposition}  
    c=\sum_j a_j\otimes e_j,\quad \text{where}\quad a_j\in F^{n_A}.
  \end{equation}
  From the upper row blocks of the generators
  (\ref{eq:concatenated-CSS}), each non-zero $a_j$ must either be a
  non-trivial $Z$-like vector in the code $\mathcal{Q}_A$, or a linear
  combination of the rows of $H_Z^A$.  This implies that any non-zero
  $a_j$ such that $\wgt (a_j)<d_Z^A$ can be removed from $c$ (set to
  zero) without any other changes; the resulting vector $c'$ should
  remain in the code as the two vectors are degenerate with
  respect to $\mathcal{C}_{\overline{H}_Z}$.  This vector has weight
  $\wgt(c')\le \wgt(c)<d_Z^Ad_Z^B$, and any non-zero component $a_j$
  in its expansion (\ref{eq:tensor-product-decomposition}) has weight
  $d_Z^A$ or larger.  Let $J\subset \{1,2,\ldots,n_B\}$ be the set of
  positions $j$ corresponding to non-zero $a_j$ in the expansion of
  $c'$.  By this logic,
  \begin{equation}
    d_Z^Ad_Z^B> \wgt(c')=\sum_{j\in J}\wgt(a_j)\ge d_Z^A|J|;
    \label{eq:projection-decomposition}  
\end{equation}
  the total number of positions in $J$ satisfies $|J|<d_Z^B$.  Denote
  $V_A=\{1,2,\ldots, n_A\}$ and $I\equiv V_A\otimes J$; the punctured
  vector $c'[I]$ preserves all non-zero positions in $c'$.  Thus,
  $c'[I]$ should be in the code
  $\mathcal{Q}'=\css\biglb(\overline{H}_X[I],(\overline{H}_Z)_I\bigrb)$,
  see Lemma \ref{th:shortened}.  By construction, the matrices
  $\overline{H}_X [I]$ and $(\overline {H}_Z)_I$ have the same
  structure (\ref{eq:concatenated-CSS}), except the code
  $\mathcal{Q}_B$ is replaced with
  $\mathcal{Q}_B'=\css\biglb(H_X [J],(H_Z)_J\bigrb)$ of length $|J|$.
  This latter code also satisfies Lemma \ref{th:shortened}; we expect
  the corresponding distance to serve as an upper bound to $d_Z^B$.
  However, since its length $|J|<d_Z^B$, the only possibility is for
  the code $\mathcal{Q}_B'$ to encode no qubits, $k_B'=0$.
  Necessarily, the code ${Q}'$ also has $k'=k_Ak_B'=0$, which makes
  the initial assumption about the existence of the codeword $c$
  invalid; this proves $d_Z=d_Z^Ad_Z^B$.
\end{proof}
\subsection{Bounds on the minimal distance}
Notice that rows of $H_Z$ in Eq.~(\ref{eq:stab-proj}) are linear
combinations of rows of $\overline{H}_Z$ in
Eq.~(\ref{eq:concatenated-CSS}), whose rows are, in turn, linear
combinations of rows of $G_Z$ in Eq.~(\ref{eq:gen-proj-subs}).
Similar relation exists between the corresponding $X$ matrices.  As a
result, there is a sequence of inclusions,
\begin{equation}
  \label{eq:inclusion-seq}
  {\cal C}_{G_X}^\perp\setminus   {\cal C}_{H_Z}\subseteq   {\cal
    C}_{\overline{H}_X}^\perp\setminus   {\cal
    C}_{\overline{H}_Z}\subseteq   {\cal C}_{H_X}^\perp\setminus   {\cal
    C}_{G_Z}, 
\end{equation}
which implies a sequence of inequalities for the three related codes:
\begin{equation}
  d_Z(G_X,H_Z)\ge d_Z(\overline{H}_X,\overline{H}_Z)\ge
  d_Z(H_X,G_Z),
\end{equation}
where, e.g., $  d_Z(G_X,H_Z)$ is the $Z$-distance in the code
$\css(G_X,H_Z)$.  

On the other hand, from  linear relations between the rows of
matrices involved, Lemma \ref{th:subs-code-H} and Theorem
\ref{th:concat-stab-code}, it follows that all of the three codes in
Eq.~(\ref{eq:inclusion-seq}) are gauge-fixed versions of the subsystem
code with the generators (\ref{eq:gen-proj-subs}).  They share the
logical generator matrices (\ref{eq:log-proj}), which implies a common
upper bound $d_Z\le d_Z^Ad_Z^B$; the proof is similar to that in
Statement \ref{th:concat-stab-code}.  We get
\begin{eqnarray}
  \label{eq:symmetric-distance}
  d_Z(G_X,G_Z)&=&  d_Z(H_X,G_Z)\le d_Z^Ad_Z^B,
  \\
  d_Z(G_X,H_Z)&=& d_Z(\overline{H}_X,\overline{H}_Z)=d_Z^Ad_Z^B.
\end{eqnarray}

Unfortunately, we are not able to get the exact values for the
$Z$-distances in the l.h.s.\ of Eq.~(\ref{eq:symmetric-distance}).  It
is clear that the general upper bound (\ref{eq:symmetric-distance}) is
sharp.  In particular, the upper bound is saturated whenever one of
the codes has distance one.  This follows from the following two lower
bounds which we adapted from Ref.~\onlinecite{Audoux-Couvreur-2017}.

\begin{statement}[Lower distance bound I]
  \label{th:loose-lower-bound}
  Consider an $F$-linear code $\css(H_X,G_Z)$ with stabilizer
  generator matrices $H_X$ and $G_Z$ given by
  Eqs.~(\ref{eq:stab-proj}) and (\ref{eq:gen-proj-subs}),
  respectively.  (a) The corresponding $Z$-distance satisfies the
  inequality%
  \begin{equation}
    d_Z(H_X,G_Z)\ge \max(d_Z^A,d_Z^B).\label{eq:loose-lower-bound}
  \end{equation}
  (b) In addition, assume that $d_Z^A>1$.  Then, with
  $F=\mathbb{F}_q$,%
  \begin{equation}
    d_Z(H_X,G_Z)\ge{q\over q-1}\, d_Z^B.\label{eq:loose-lower-bound-two}
  \end{equation}
\end{statement}
The proof is based on the following Lemma from
Ref.~\onlinecite{Audoux-Couvreur-2017}:
\begin{lemma}[Lower distance bound II]
  \label{th:lower-bnd-lemma}
  Consider an $F$-linear stabilizer code ${\cal Q}=\css(H_X,G_Z)$ with
  generator matrices $H_X$ and $G_Z$ in Eqs.~(\ref{eq:stab-proj}) and
  (\ref{eq:gen-proj-subs}), respectively.  Given $a\in {\cal Q}_A^X$,
  consider a set $\Omega_A(a)=\{x_1,x_2,\ldots,x_N\}$ of vectors
  degenerate with $a$ with respect to ${\cal C}_{H_X^A}$, such that
  each $i\in \{1,2,\ldots,n_A\}$ is in the support of no more than $K$
  of these vectors.  Then, for any $Z$-like codeword
  $c\in {\cal Q}^Z$ such that $[a\otimes I({n_B})]\, c^T\neq 0$,
  \begin{equation}
    \label{eq:lower-bnd-lemma}
    \wgt(c)\ge \left\lceil {N\over K}\, d_Z^B\right\rceil.
  \end{equation}
\end{lemma}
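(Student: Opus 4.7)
The plan is to reshape the length-$n_An_B$ vector $c$ as a matrix $C\in F^{n_A\times n_B}$ with the convention $(u\otimes v)\,c^T=u\,C\,v^T$, and translate the stabilizer constraint $H_Xc^T=0$ for the generator matrix~(\ref{eq:stab-proj}) into three matrix identities:
\begin{equation*}
H_X^A\,C\,(H_X^B)^T=0,\quad H_X^A\,C\,(L_X^B)^T=0,\quad L_X^A\,C\,(H_X^B)^T=0.
\end{equation*}
The first two identities, combined with the canonical relations $L_X^B(H_Z^B)^T=0$ and $L_X^B(L_Z^B)^T=I$, force each row of $H_X^A C$ to lie in $\mathcal{C}_{H_Z^B}$, while the third gives that each row of $L_X^A C$ lies in $\mathcal{C}_{H_X^B}^\perp$.

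Next, I would use $a\in\mathcal{Q}_A^X\subseteq\mathcal{C}_{H_Z^A}^\perp$ to decompose $a=\alpha H_X^A+\beta L_X^A$ for some coefficient vectors $\alpha,\beta$, so that $aC=\alpha(H_X^A C)+\beta(L_X^A C)\in\mathcal{C}_{H_X^B}^\perp$. For any $x_i\in\Omega_A(a)$, writing $x_i=a+\gamma_iH_X^A$ gives $x_iC-aC=\gamma_i(H_X^A C)\in\mathcal{C}_{H_Z^B}$, so all of the vectors $\{x_iC\}_{i=1}^N$ sit in the same coset of $\mathcal{C}_{H_Z^B}$ inside $\mathcal{C}_{H_X^B}^\perp$. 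Reading the hypothesis $[a\otimes I(n_B)]\,c^T\neq0$ as placing that coset outside $\mathcal{C}_{H_Z^B}$ (equivalently, $aC$ represents a non-trivial $Z$-logical of $\mathcal{Q}_B$), each $x_iC$ lies in $\mathcal{C}_{H_X^B}^\perp\setminus\mathcal{C}_{H_Z^B}$, and therefore $\wgt(x_iC)\geq d_Z^B$ for every $i\in\{1,\ldots,N\}$.

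To convert these $N$ bounds into one on $\wgt(c)=\wgt(C)=\sum_{k=1}^{n_A}\wgt(C_{k,\cdot})$, with $C_{k,\cdot}$ the $k$-th row of $C$, I would use the elementary inclusion $\supp(x_iC)\subseteq\bigcup_{k\in\supp(x_i)}\supp(C_{k,\cdot})$, giving $\wgt(x_iC)\leq\sum_{k\in\supp(x_i)}\wgt(C_{k,\cdot})$. Summing over $i$ and swapping the order of summation, the covering hypothesis that each row index $k$ belongs to $\supp(x_i)$ for at most $K$ indices $i$ yields
\begin{equation*}
N\,d_Z^B\;\leq\;\sum_{i=1}^{N}\wgt(x_iC)\;\leq\;K\sum_{k=1}^{n_A}\wgt(C_{k,\cdot})\;=\;K\,\wgt(c),
\end{equation*}
and integrality of the weight delivers the claimed $\wgt(c)\geq\lceil N\,d_Z^B/K\rceil$.

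The main technical point I anticipate is confirming that the hypothesis $[a\otimes I(n_B)]\,c^T\neq0$ really does place $aC$ outside $\mathcal{C}_{H_Z^B}$, and not merely outside the zero vector in $F^{n_B}$. I expect this to follow from the freedom to replace $c$ by a degenerate representative $c+\lambda(I(n_A)\otimes H_Z^B)$ (which acts on the rows of $C$ by shifts within $\mathcal{C}_{H_Z^B}$ and hence preserves the coset $aC+\mathcal{C}_{H_Z^B}$), reducing to the case where $aC$ is literally zero whenever it is trivial as a logical of $\mathcal{Q}_B$. Once this interpretive step is in place, the rest is just the three matrix identities of the first paragraph, the coset calculation of the second, and the double-counting bound controlled by the covering parameter $K$.
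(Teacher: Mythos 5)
Your argument is, up to notation, the paper's own proof: reshaping $c$ as the matrix $C$ with rows $C_{k,\cdot}=b_k$ is exactly the paper's expansion $c=\sum_k f_k\otimes b_k$, your $x_iC$ is the paper's $x_i^T=(a_i\otimes I_{n_B})c^T$, the three matrix identities encode $H_Xc^T=0$ just as in the paper, and the final double-count over the covering multiplicity $K$ is the paper's closing inequality. So this is the same route, not a different one.

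The one place you do something the paper does not is your last paragraph, where you flag that the literal hypothesis $[a\otimes I(n_B)]c^T\ne 0$ only says $aC\ne 0$, whereas the argument needs the stronger statement $aC\notin\mathcal{C}_{H_Z^B}$ (the paper simply asserts ``$L_X^B x_i^T\ne 0$'' without justification at exactly this point). Your concern is legitimate: writing $c=\mu L_Z+\lambda G_Z$ one finds $aC=\beta\tilde\mu\,L_Z^B+a\tilde\lambda_2 H_Z^B$, and for $k_A\ge2$ one can have $\beta\tilde\mu=0$ while $a\tilde\lambda_2 H_Z^B\ne0$, so $aC\ne 0$ does not by itself force $aC\notin\mathcal{C}_{H_Z^B}$. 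However, your proposed repair does not close this: you can indeed pass to a degenerate representative $c'=c+\lambda(I_{n_A}\otimes H_Z^B)$ for which $aC'=0$ whenever $aC\in\mathcal{C}_{H_Z^B}$, but that simply means the hypothesis of the lemma fails for $c'$; it does not let you conclude anything about $\wgt(c)$, which is what the lemma bounds, and $\wgt(c')$ is not controlled by $\wgt(c)$. The correct reading is that the lemma should be invoked (and in fact is invoked, in Statement~\ref{th:loose-lower-bound} and Example~\ref{ex:single-qubit-cyclic}) under the stronger hypothesis that $aC$ is a non-trivial $Z$-logical of $\mathcal{Q}_B$; that such an $a\in\mathcal{Q}_A^X$ exists for every $c\in\mathcal{Q}^Z$ follows from $(\mathcal{C}_{G_Z})^\perp=\mathcal{C}_{H_Z^A}^\perp\otimes\mathcal{C}_{H_Z^B}^\perp$ together with your observation that every row of $H_X^A C$ already lies in $\mathcal{C}_{H_Z^B}$, which forces the witnessing $u$ to lie outside $\mathcal{C}_{H_X^A}$. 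With that reading in place, the rest of your argument is complete and correct.
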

\begin{proof}
  Given $c$ in Eq.~(\ref{eq:lower-bnd-lemma}), consider an expansion
  $$
  c=\sum_{j=1}^{n_A} f_j\otimes b_j, \quad b_j\in F^{n_B},
  $$
  where components of $f_j\in F^{n_A}$ are all zero except for
  $f_j[j]=1$, $j\in\{1,\ldots,n_A\}$.  By assumption, the dot-product
  $a_i\otimes I(n_B)$ with $c$ is non-zero; for any
  $a_i\in\Omega_A(a)$,%
  $$
  x_i^T\equiv (a_i\otimes I_{n_B} ) \,c^T=\sum_j a_i[j]\, b_j^T.
  $$
  It is easy to check that the resulting vector $x_i\in F^{n_B}$
  satisfies $H_X^B x_i^T=0$, while $L_X^Bx_i^T\neq0$.  That is, $x_i$
  is in ${\cal Q}_B^Z$, so that $\wgt(x_i)\ge d_Z^B$.  Let us now sum
  the weights of vectors $x_i$ corresponding to all elements of
  $\Omega_A(a)$,
  \begin{eqnarray*}
    N d_Z^B  \le  \sum_{i=1}^N \wgt(x_i)
    &\le& \sum_{j=1}^{n_A} \sum_{i=1}^N \wgt(a_i[j]\,b_j^T)
    \\  
    &\le & K\sum_{j=1}^{n_A}  \wgt(b_j) =  K \wgt(c),
  \end{eqnarray*}
which gives Eq.~(\ref{eq:lower-bnd-lemma}) since $\wgt(c)$ is an integer.
\end{proof}

\begin{proof}[Proof of Statement \ref{th:loose-lower-bound}]
  Both (a) and (b) are trivial if $k_Ak_B=0$; assume otherwise below.
  (a) The construction is symmetric with respect to constituent codes
  ${\cal Q}_A$ and ${\cal Q}_B$; without limiting generality assume
  $d_Z^B\ge d_Z^A$.  Use the set $\Omega_A(a)=\{a\}$ in Lemma
  \ref{th:lower-bnd-lemma} with $N=K=1$, which proves
  $d_Z(H_X,G_Z)\ge d_Z^B$.  (b) The condition $d_Z^A>1$ implies
  that any all-zero column in $H_X^A$ (say, at position $i\le n_A$)
  must be matched by a row (or a linear combination of rows) of
  $H_Z^A$ with the only non-zero element at $i$.  This guarantees that
  any $X$-like codeword $a$ has no support at such position(s).  For
  any $a\in {\cal Q}_X^A$, consider the set $\Omega\equiv\Omega_A(a)$
  of size $N=q^{\rank H_A^X}$ which contains all vectors degenerate
  with $a$.  For any $i\le n_A$, the set of characters
  $\Omega[i]\equiv \{x[i]:\forall x\in\Omega_A(a)\}$ either contains
  all zeros, or contains equal numbers of all elements of $F$---this
  can be seen by considering a generating matrix with all except one
  row not supported on $i$.  For such a set, $K=(q-1)q^{(\rank H_A^X-1)}$,
  which proves Eq.~(\ref{eq:loose-lower-bound-two}). 
\end{proof}

Another application of Lemma \ref{th:lower-bnd-lemma} is demonstrated by
the following
\begin{example}
  \label{ex:single-qubit-cyclic}
  Let $\mathcal{Q}^A=\css(H_X^A,H_Z^A)$ be a single-qubit encoding
  (consta)cyclic CSS code with parameters $[[n_A,1,(d_X^A,d_Z^A)]]$.
  Then, for any $\mathcal{Q}^B=\css(H_X^A,H_Z^A)$, the $Z$-distance of
  the product code $\css(H_X,G_Z)$ with stabilizer generator matrices
(\ref{eq:stab-proj}) and (\ref{eq:gen-proj-subs}) satisfies
  \begin{equation}
    \label{eq:single-qubit-cyclic}
    d_Z(H_X,G_Z)\ge \lceil n_A d_Z^B/d_X^A\rceil.
  \end{equation}
\end{example}
\begin{proof}
  Use Lemma \ref{th:lower-bnd-lemma} with $\Omega_A(a)$ a set of size
  $N=n_A$ constructed by shifting an $X$-like minimum-weight codeword
  $a\in \mathcal{Q}_A^X$, $\wgt(a)=d_X^A$ by $0$, $1$, \ldots, $n_A-1$
  positions.  The resulting vectors $x_i\in\Omega_A(a)$ cannot be
  linear combinations of rows $H_X^A$, or else the original vector $a$
  would be too, thus they must be in the code.  Since $k_A=1$, they
  must be degenerate with $a$.  The lower bound
  (\ref{eq:single-qubit-cyclic}) is obtained if we notice that for
  this set, $K=d_X^A$.
\end{proof}

The discussed lower distance bounds are pretty far from the generic
upper bound (\ref{eq:symmetric-distance}).  On the other hand, at
least in the binary case, it is not easy to construct an example of a
subsystem product code with the distance strictly below the upper
bound.  Discovering such examples is dramatically simplified with the
help of the ansatz in the following Theorem
\ref{ex:upper-distance-bnd-nA}, a generalization of the construction
based on the homological product of Steane's $[[7,1,3]]$ code with
itself\cite{Bravyi-Hastings-2013,Audoux-Couvreur-2017} (see Example
\ref{ex:square-Steane} below)
\begin{theorem}[$X$--$Z$-symmetric product codes]
  \label{ex:upper-distance-bnd-nA}
  Consider codes ${\cal Q}_A=\css(H_X^A,H_Z^A)$ and
  ${\cal Q}_B=\css(H_Z^A,H_X^A)$ with $X$ and $Z$ generator matrices
  interchanged.  The distances of the corresponding subsystem product
  code $\css(G_X,G_Z)$ with generators (\ref{eq:gen-proj-subs}) satisfy  
  \begin{equation}
    d_X(G_X,G_Z)\le n_A, \quad d_Z(G_X,G_Z)\le n_A.
    \label{eq:upper-distance-bnd-nA}  
 \end{equation}
  The inequality (\ref{eq:symmetric-distance}) becomes strict
  if $n_A<d_Z^Ad_Z^B\equiv d_Z^Ad_X^A$.
\end{theorem}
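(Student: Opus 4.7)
The plan is to exhibit one explicit codeword of Hamming weight $n_A$ and prove it represents a nontrivial $Z$-type logical class of the subsystem code $\css(G_X,G_Z)$; the bound on $d_X$ then follows by the $X\leftrightarrow Z$ symmetry of the construction. The ansatz is the vector $c\in F^{n_An_B}$ whose $(k,l)$-entry equals $\delta_{kl}$ (using $n_B=n_A$), i.e., the vectorization of the identity matrix $I(n_A)$, of weight exactly $n_A$.

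The first step is to check $c\in\mathcal{C}_{H_X}^\perp$ using the stabilizer generator matrix $H_X$ from Lemma \ref{th:subs-code-H}. Under the identity $(A\otimes B)\,\mathrm{vec}(I)=\mathrm{vec}(AB^T)$ and the relations $H_X^B=H_Z^A$, $L_X^B=L_Z^A$ that come from the $X\leftrightarrow Z$ swap defining $Q_B$, the three row-blocks of $H_X$ annihilate $c$ iff the matrices $H_X^A(H_Z^A)^T$, $H_X^A(L_Z^A)^T$, and $L_X^A(H_Z^A)^T$ all vanish. The first is the CSS orthogonality of $Q_A$; the other two encode that a logical operator commutes with any stabilizer of the opposite type.

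The second step is to check $c\notin\mathcal{C}_{G_Z}$ by producing a test vector $v\in\mathcal{C}_{G_Z}^\perp$ with $c\cdot v\neq 0$. Assume $k_A\ge 1$, else $d_Z=\infty$ trivially. Pick any logical index $\ell$, put $\alpha=(L_X^A)_\ell$ and $\beta=(L_Z^A)_\ell$, and set $v=\mathrm{vec}(V)$ with $V=\alpha^T\beta$. The conditions $H_Z^AV=0$ and $V(H_X^A)^T=0$ defining $\mathcal{C}_{G_Z}^\perp$ reduce to $H_Z^A\alpha^T=0$ and $\beta(H_X^A)^T=0$, which are again the logical-stabilizer commutation relations. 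The inner product is the trace $c\cdot v=\mathrm{tr}(V)=\sum_k\alpha_k\beta_k=(L_X^A(L_Z^A)^T)_{\ell\ell}=1$ by the normalization (\ref{eq:A-B-logical}). Hence $d_Z(G_X,G_Z)\le n_A$. The $X$-distance bound follows by running an identical argument with $X$ and $Z$ interchanged, using $V'=(L_Z^A)_\ell^T(L_X^A)_\ell$ as the test matrix; the strict inequality in (\ref{eq:symmetric-distance}) is immediate whenever $n_A<d_Z^Ad_Z^B$.

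The only real obstacle is bookkeeping: correctly propagating the $X\leftrightarrow Z$ swap of $Q_B$ through the three-block structure of $H_X$ and being consistent with the $\mathrm{vec}/\otimes$ convention. Once the identity matrix is guessed as the ansatz, each verification collapses to one of the four standard CSS identities $H_X^A(H_Z^A)^T=0$, $H_X^A(L_Z^A)^T=0$, $L_X^A(H_Z^A)^T=0$, or $L_X^A(L_Z^A)^T=I(k_A)$.
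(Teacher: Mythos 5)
Your proof is correct and follows essentially the same route as the paper's: the same codeword $c=\sum_j e_j\otimes e_j=\mathrm{vec}(I_{n_A})$ of weight $n_A$, the same verification that the three row-blocks of $H_X$ from Eq.~(\ref{eq:stab-proj}) annihilate $c$ by reducing to the CSS identities $H_X^A(H_Z^A)^T=0$, $H_X^A(L_Z^A)^T=0$, $L_X^A(H_Z^A)^T=0$, and the same symmetry argument for $d_X$. Your explicit test vector $v=\mathrm{vec}\bigl((L_X^A)_\ell^T(L_Z^A)_\ell\bigr)$ is, up to picking a single row, exactly the logical $L_X=L_X^A\otimes L_X^B=L_X^A\otimes L_Z^A$ the paper invokes via the check $L_X c^T\neq0$ (the paper's displayed $L_X=L_X^A\otimes L_Z^B$ in Lemma~\ref{th:subs-code-H} appears to be a typo, as is clear from its own proof); your $\mathrm{vec}$/trace bookkeeping makes the normalization $c\cdot v=\mathrm{tr}(L_X^A(L_Z^A)^T)_{\ell\ell}=1$ transparent and is arguably the cleaner way to state the final non-triviality step.
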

\begin{proof}
  The construction is symmetric with respect to $X$ and $Z$ parts of
  ${\cal Q}_A$; it is sufficient to prove the bound for
  $d_Z\equiv d_Z(G_X,G_Z)=d_Z(H_X,G_Z)$ with $H_X$ in
  Eq.~(\ref{eq:stab-proj}).  We have $n_A=n_B$; consider a vector
  $c=\sum_{j=1}^{n_A}e_j\otimes e_j$ of weight $n_A$, where $e_j$ are
  weight-one vectors as in
  Eq.~(\ref{eq:tensor-product-decomposition}).  Using
  Eq.~(\ref{eq:A-B-logical}) and the orthogonality between the rows of
  remaining $X$ and $Z$ generator matrices, verify that $H_Xc^T=0$
  while $L_Xc^T\neq0$.  Thus, $c$ is a valid $Z$-like codeword in
  $\css(H_X,G_Z)$ and $d_Z\le n_A$.
\end{proof}
It is known that long CSS codes with distances scaling linearly
with the code length $n$ exist\cite{Calderbank-Shor-1996}.  For a pair
of such codes, the generic upper bound (\ref{eq:symmetric-distance})
has asymptotic scaling $d\le {\cal O}(n_An_B)$, linear in the length
of the product code.  On the other hand, the upper bound for the
corresponding $X$--$Z$-symmetric product codes, see Theorem
\ref{ex:upper-distance-bnd-nA}, gives $d\le n_A$, a square root of the
length of the product code.  Thus, we can not expect the generic upper
bound to be saturated.  The following explicit Examples demonstrate
that such a saturation does not happen for any finite field $F$.

\begin{example}
  \label{ex:odd-base-code}
  For any field $F=\mathbb{F}_q$ with $q\equiv 2t+1$ odd, consider a
  $[[3,1,(2,2)]]_q$ code with CSS generators $H_X=(1,1,1)$,
  $H_Z=(t,t,1)$.  The corresponding $X$--$Z$-symmetric product code in
  Theorem \ref{ex:upper-distance-bnd-nA} has distances $d_X=d_Z=3$,
  smaller than the upper bound (\ref{eq:symmetric-distance}).  For
  $q=3$, this saturates the lower bound
  (\ref{eq:loose-lower-bound-two}).
\end{example}

\begin{example}
  \label{ex:dbl-even}
  For any $q=2^m$ with $m$ even, so that $r\equiv (q-1)/3$ be an
  integer, consider a stabilizer code $[[3,1,(2,2)]]_q$ with cyclic
  $H_X^A$ and constacyclic $H_Z^B$ generators,
\begin{equation}
  \label{eq:counterexample-dbl-even}
  H_X^A=\left(
    \begin{array}[c]{ccc}
      1&1&1
    \end{array}\right), \quad
    H_Z^A=\left(
    \begin{array}[c]{ccc}
      1&x^r&x^{2r}
    \end{array}\right),
\end{equation}
where $x\in\mathbb{F}_q$ is a primitive element, i.e., $x^{q-1}=1$.
Construct an $X$--$Z$-symmetric product code as in Theorem
\ref{ex:upper-distance-bnd-nA}.  Combining
Eq.~(\ref{eq:upper-distance-bnd-nA})  with the lower bound
(\ref{eq:single-qubit-cyclic}) again gives $d_Z(G_X,G_Z)=3$, smaller
than $d_Z^Ad_Z^B=d_Z^Ad_X^A=4$.
\end{example}
\begin{example}[Square of Steane's code\cite{Bravyi-Hastings-2013,Audoux-Couvreur-2017}]
  \label{ex:square-Steane}
  For any $q=2^m$, $m\in\mathbb{N}$, consider a pair of identical
  cyclic codes $[[7,1,(3,3)]]_q$ with stabilizer generator polynomials
  $h_X^A(x)=h_Z^B(x)=1+x^2+x^3+x^4$.  Combination of the $X$--$Z$
  symmetric product construction from Theorem
  \ref{ex:upper-distance-bnd-nA} and the lower bound
  (\ref{eq:single-qubit-cyclic}) gives $d_Z(G_X,G_Z)=7$, smaller than
  $d_Z^Ad_Z^B=9$.
\end{example}

\subsection{Previously known constructions}
In the remainder of this Section, we discuss several existing code
families which can be described as special cases of the subsystem
product code construction in Lemma \ref{th:subs-code-H}, or as
gauge-fixed versions of such codes.

The first such family, homological product codes from
Refs.~\onlinecite{Bravyi-Hastings-2013,Audoux-Couvreur-2017}, is based
on square nilpotent matrices such that $\delta^2=0$, with elements
from a field $F=\mathbb{F}_q$ with $q=2^m$, $m\in\mathbb{N}$.  Such a
matrix $\delta$ and its transposed $\delta^T$ can be used to construct
the stabilizer code $\css(\delta,\delta^T)$ and its symmetric
$\css(\delta^T,\delta)$.  Alternatively, stabilizer generators of a
CSS code with $\rank{H_X}=\rank{H_Z}$ can be used to form such a
nilpotent matrix, $\delta=H_X^TMH_Z$, where $M$ is a matrix of
appropriate dimensions chosen to preserve the rank of the product.
\begin{example}[Homological product codes]
  \label{ex:homo-prod-code}
  For $q=2^m$, $m\in\mathbb{N}$, consider a pair of $F$-linear
  stabilizer codes $\mathcal{Q}^\mu=\css(\delta_\mu,\delta_\mu^T)$
  with parameters $[[n_\mu,k_\mu,(d_X^\mu,d_Z^\mu)]]_q$ based on
  nilpotent matrices $\delta_\mu$, where $\mu\in\{A,B\}$.  Then the
  matrix $\delta_C=I(n_A)\otimes \delta_B+\delta_A\otimes I(n_B)$ is
  also nilpotent.  The corresponding code $\css(\delta_C,\delta_C^T)$
  has logical generator matrices given by
  Eq.~(\ref{eq:gen-proj-subs}), and parameters
  $[[n_An_B,k_Ak_B,(d_X^C,d_Z^C)]]_q$, where, e.g.,
  \begin{equation}
    \label{eq:hom-prod-distance-bnd}
    d_Z(G_X,G_Z)\le d_Z^C\le d_Z^Ad_Z^B.
  \end{equation}
\end{example}
\begin{proof}
  It is easy to check that the logical generator matrices are given by
  Eq.~(\ref{eq:log-proj}); the upper bound on the distance follows.
  On the other hand, rows of $\delta_C$ and $\delta_C^T$,
  respectively, are linear combinations of the rows of $G_X$ and
  $G_Z$, see Eq.~(\ref{eq:gen-proj-subs}).  This implies that the
  stabilizer code defined by this matrix is a gauge-fixed version of
  the subsystem product code $\css(G_X,G_Z)$, which gives the lower
  bound. 
\end{proof}

As before, the upper distance bound is sharp, but it is not
necessarily saturated.  In particular, an
example\cite{Bravyi-Hastings-2013} can be constructed along the lines
of Example \ref{ex:square-Steane}, as a homological product code
combining two Steane's codes with identical symmetric nilpotent
matrices $\delta$.  Such a code has distance $d=7$, while the  the
upper bound in Eq.~(\ref{eq:hom-prod-distance-bnd}) gives $d\le 9$.

Our last example shows that subsystem product codes and the
corresponding gauge-fixed codes from Lemma \ref{th:subs-code-H} can be
seen as a generalization of subsystem hypergraph-product codes and
corresponding gauge-fixed codes recently constructed by Li and
Yoder\cite{Li-Yoder-2020} which are, in turn, a generalization of
Bacon-Shor\cite{Bacon-subs-2006} and Shor's\cite{Shor-FT-1996} codes,
respectively.  The Li--Yoder construction is based
on a pair of classical codes, it is similar but not identical to those
in Refs.~\onlinecite{Li-etal-Brown-2018,Yoder-2019}.  Namely, the
gauge and stabilizer generator matrices can be obtained from
Eqs.~(\ref{eq:gen-proj-subs}) and (\ref{eq:stab-proj}) by considering
the classical codes as degenerate quantum codes with empty $H_Z^A$ and
$H_X^B$ matrices.
\begin{example}[Subsystem QHP codes\cite{Li-Yoder-2020}]
  \label{ex:bacon-shor}
  Given a pair of $F$-linear classical codes with parameters
  $[n_\mu,k_\mu,d_\mu]_q$, parity check matrices $P_\mu$, and
  generator matrices $Q_\mu$, where $\mu\in\{A,B\}$, consider a
  subsystem code $\css(G_X,G_Z)$ with gauge generator matrices
  \begin{equation}
    G_X=(P_A\otimes I_{n_B}),\quad G_Z=(I_{n_A}\otimes P_B).
    \label{eq:subs-simplified-G}  
\end{equation}
  The corresponding stabilizer generator matrices are
  \begin{equation}
  H_X=\left(P_A\otimes Q_B\right),\quad
  H_Z=\left(Q_A\otimes P_B\right).\label{eq:subs-simplified-H}
\end{equation}
  Assuming $k_Ak_B>0$, the parameters of the subsystem and both
  gauge-fixed codes are $[[n_An_B,k_Ak_B,(d_A,d_B)]]_q$.
\end{example}
The parameters of the codes follow from Theorem
(\ref{th:concat-stab-code}) where we should use $d_X^A=d_Z^B=1$.  In
particular, we get the original Bacon-Shor (BS) and Shor's codes if we
take repetition codes for both classical codes.   

We also notice that a subsystem product code constructed from a BS
code and a repetition code coincides with the 3-dimensional BS code as
proposed by Napp and Preskill\cite{Napp-Preskill-2013} (this
construction differs from the 3D code originally suggested by
Bacon\cite{Bacon-subs-2006}).  Napp \& Preskill construction can be
 seen as a three-fold subsytem product of repetition codes, and
can be generalized to higher dimensions.  However, it is easy to check
that these single-qubit encoding codes are just rearrangements of
conventional BS codes from a 2D lattice to higher dimensions.  The only
differences are the measurement redundancy and local connectivity of
neighboring qubits, as defined by the specific sets of gauge generators
used in the construction.

\section{Homological distances in tensor products of chain
  complexes}
\label{sec:complexes}
Example \ref{ex:bacon-shor} may serve as a nice introduction to the
subject of this section.  Indeed, Bacon-Shor code can be obtained from
Kitaev's toric code by erasing qubits on all vertical (or all
horizontal) bonds.  The latter code corresponds exactly to a
CW-complex associated with a square lattice with periodic boundary
conditions---a tensor product of two cycle graphs.  More general gauge
generator matrices (\ref{eq:subs-simplified-G}) can be seen as a
result of erasing one of the blocks in a QHP
code\cite{Tillich-Zemor-2009,Tillich-Zemor-2014} with stabilizer
generator matrices
\begin{eqnarray}
  \nonumber 
  H_X&=&(P_A\otimes I_{n_B}|I_B\otimes P_B^T),\\ 
  H_Z&=&(I_{n_A}\otimes P_B|-P_A^T\otimes I_A),
  \label{eq:qhp-generators}
\end{eqnarray}
where the dimensions of the identity matrices $I_A$ and $I_B$ match
the numbers of rows in the two check matrices.  The matrices $H_X$ and
$H_Z^T$ correspond exactly to the boundary operator matrices in a
product of the chain complexes $\mathcal{K}(P_A)$ and
$\mathcal{K}(P_B^T)$.  In this section we consider tensor products of
general bounded $F$-linear chain complexes.  The corresponding
boundary operators, see Eq.~(\ref{eq:Cj}) below, have row- and
column-blocks with the structure of the gauge generator matrices
(\ref{eq:gen-proj-subs}).  

\subsection{Main results for $F$-linear chain complexes}
\label{sec:results}

Our main result is the expression for the homological distance in a
tensor product of two bounded chain complexes of finite-dimensional
vector spaces over a finite field $F$, where one of the complexes
contains just two non-trivial spaces.  Specifically, let $\mathcal{A}$
be such a complex of any length specified in terms of boundary
operators $\partial_j:\mathcal{A}_{j-1}\leftarrow \mathcal{A}_j$
defined explicitly as matrices, $\partial_j=A_{j}$ such that
$A_j A_{j+1}=0$, and $\mathcal{B}$ a complex with just two non-trivial
spaces $\mathcal{B}_0$ and $\mathcal{B}_1$ and a single non-trivial
boundary operator (matrix)
$B_1: \mathcal{B}_0\leftarrow \mathcal{B}_1$ mapping between them.
Then, the homological distance $d_j(\mathcal{C})$ for the $j$\,th
homology group in the tensor product
$\mathcal{C}=\mathcal{A}\times \mathcal{B}$ of the two complexes is
\begin{equation}
  \label{eq:result-thm}
  d_j(\mathcal{C})
  =\min\biglb( d_j(\mathcal{A}) d_0(\mathcal{B}),
  d_{j-1}(\mathcal{A}) d_1(\mathcal{B})\bigrb). 
\end{equation}
This is a generalization of the identical expression for the tensor
product of binary chain complexes from
Ref.~\onlinecite{Zeng-Pryadko-2018}.

There is actually a stronger statement which concerns the homological
distance $d_{j}(\mathcal{C}_{i,j-i})$ after a projection onto a single
subspace $\mathcal{C}_{i,j-i}=\mathcal{A}_i\otimes \mathcal{B}_{j-i}$,
where $j-i\in \{0,1\}$.  Here, a chain complex with the space
$\mathcal{C}_j$ reduced to its subspace has modified boundary
operators $\partial_i'$ and $\partial_{i+1}'$.  The latter is defined
as a composition of a projector $P$ and the original boundary operator
$\partial_{i+1}$, $\partial_{i+1}'\equiv P\partial_{i+1}$, where
$P^2=P$ and the image of $P$ is the subspace of interest.  The
modified boundary operator $\partial_i'$ is defined to ensure the
composition to vanish, $\partial_i'\,\partial_{i+1}'=0$.  For thus defined 
chain complex ${\cal C}'_{i,j-i}$ with the space $\mathcal{C}_j$ in
the original product complex ${\cal C}$ projected to its subspace
$\mathcal{C}_{i,j-i}$, the homological distance at level $j$ is given
by one term only,%
\begin{equation}
  \label{eq:result-proj}
  d_{j}(\mathcal{C}'_{i,j-i})=d_i(\mathcal{A}) d_{j-i}(\mathcal{B}),\quad
  j-i\in \{0,1\}.  
\end{equation}

Our third result concerns with the minimal distance in a tensor
product of two arbitrary-length chain complexes of vector spaces over
a finite field $F$.  Here the upper bound on the homological distance
reads
\begin{equation}
  \label{eq:upper-general}
  d_{j}(\mathcal{C})\le \min_{i\in\mathbb{Z}} d_{i}(\mathcal{A})
  d_{j-i}(\mathcal{B}). 
\end{equation}
A lower bound for the same distance $d_j(\mathcal{C})$ can be
constructed by projecting onto the individual product spaces
$\mathcal{A}_i\otimes \mathcal{B}_{j-i}$, $i\in\mathbb{Z}$, whose
direct sum gives the degree-$j$ space ${\cal C}_j$ in the product
complex.  This gives
$ d_j (\mathcal{C})\ge \min_i d(\mathcal{C}_{i,j-i}')$.  The result of
the projection can be seen as an $F$-linear quantum subsystem code
with CSS gauge generator matrices in the product form
(\ref{eq:gen-proj-subs}),%
\begin{equation}
  \label{eq:subsystem-product}
 G_X=\left(
    \begin{array}[c]{c}
      I({a_i})\otimes {B}_{j-i}\\ {A}_i\otimes I(b_{j-i})
    \end{array}\right),\;
  G_Z=\left(
    \begin{array}[c]{c}
      I({a_i})\otimes {B}_{j-i+1}^T\\ {A}_{i+1}^T\otimes I(b_{j-i})
    \end{array}\right),
\end{equation}
where $I(a)\equiv I_a$ is the size-$a$ identity matrix, and $a_i$ and
$b_i$, respectively, are the dimensions of the degree-$i$ spaces in
the chain complexes $\mathcal{A}$ and $\mathcal{B}$.  Thus, the
$Z$-distance of the subsystem code with CSS generators
(\ref{eq:subsystem-product}) may serve as a lower bound for
$d_j(\mathcal{C})$, complimentary to Eq.~(\ref{eq:upper-general}).

Unfortunately, such a projection is not an ideal tool for finding the
minimum distances in the product complex, as the distance may actually
be reduced in some cases.  The examples of such a reduction are based
on Theorem~\ref{ex:upper-distance-bnd-nA} in the previous Section; it
may happen for any finite field.

However, this reduction only concerns the minimum distances in tensor
products of chain complexes after projection to one of the subspaces,
it does not prevent the inequality (\ref{eq:upper-general}) from being
saturated.  We conducted extensive numerical calculations finding
homological distances for products of random $\mathbb{F}_q$-linear
chain complexes with $q\in\{2,3,2^2,5,7,2^3,3^2,11\}$ and space
dimensions of up to 12, and an exhaustive enumeration of products of
binary chain complexes with individual spaces of dimension up to $7$.
Yet we haven't been able to find a single example of a pair of chain
complexes whose product would fail to reach the upper bound
(\ref{eq:upper-general}).  Combined with analytical results for
multiple products of chain complexes involving just two spaces, we
\emph{conjecture} that in general, for any finite field
$F=\mathbb{F}_q$, the homological distances in a tensor product of a
pair of bounded chain complexes of vector spaces over $F$ satisfy the
equality
\begin{equation}
  \label{eq:conjectured-distance}
  d_j(\mathcal{A}\times \mathcal{B})=\min_{i\in\mathbb{Z}}
  d_i(\mathcal{A}) d_{j-i}(\mathcal{B}).  
\end{equation}

\subsection{Upper bound on the distance}

\begin{statement}
  \label{st:upper-bound-complex}
  Consider two $F$-linear chain complexes
  $\mathcal{A}=\mathcal{K}(A_1,\ldots,A_\ell)$ and
  $\mathcal{B}=\mathcal{K}(B_1,\ldots,B_{\ell'})$.  Then, for any
  $i,j\in\mathbb{Z}$, the homological distance of the product complex
  $\mathcal{C}=\mathcal{A}\times \mathcal{B}$ at level $j$
  satisfies the inequality
  \begin{equation}    
    d_{j}(\mathcal{C})\le d_i(\mathcal{A})d_{j-i}(\mathcal{B}).
    \label{eq:upper-bound-one}  
\end{equation}
\label{th:upper-bound-one}
\end{statement}
\begin{proof}
  By definition, the distances $d_i(\mathcal{A})$ and
  $d_{j-i}(\mathcal{B})$ are natural or infinite.  Thus, if one or both
  homology groups are trivial, $k_i(\mathcal{A})=0$ or
  $k_{j-i}(\mathcal{B})=0$ (in which case the corresponding distance is
  infinite), the r.h.s.\ of Eq.~(\ref{eq:upper-bound-one}) equals
  infinity, so that the inequality in question is trivially satisfied.

  Otherwise, with both homology groups non-trivial, consider a pair of
  minimum-weight homologically non-trivial vectors
  $a\in \mathcal{H}_i(\mathcal{A})$ and
  $b\in \mathcal{H}_j(\mathcal{B})$ such that
  $\wgt(a)=d_i(\mathcal{A})$ and $\wgt(b)=d_j(\mathcal{B})$.  Vector
  $a$ is a non-trivial $Z$-like codeword in the stabilizer code
  $\css(A_i,A_{i+1}^T)$; denote $a'$ an $X$-like codeword in the same
  code conjugate to $a$, that is, $a' \cdot a=1$.  In other words,
  this vector is a co-cycle in $\tilde{\cal A}_i$. [In fact, $a'$ is a
  member of the co-homology group $H_i(\tilde{\cal A})$, but this is
  not needed for the proof.]  Similarly, denote $b'$ an $X$-like
  codeword in the code $\css(B_{j-i},B_{j-i+1}^T)$ conjugate to $b$, a
  co-cycle in $\tilde{\cal B}_j$.  Construct $c\in \mathcal{C}_{j}$
  by assigning non-zero value $c_{i,j-i}=a\otimes b$ in the subspace
  $\mathcal{A}_i\otimes \mathcal{B}_{j-i}$, and zero in all other
  subspaces at level $j$.  Clearly,
  $\wgt(c)=d_i(\mathcal{A}) d_{j-i}(\mathcal{B})$; to prove the upper
  bound (\ref{eq:upper-bound-one}) we just need to show that
  $c\not\simeq0$.  To this end, consider a vector $c'$ constructed
  similarly to $c$ but from vectors $a'$ and $b'$; it is easy to check
  that $c\cdot c'=1$.  In addition, this vector is a co-cycle in
  $\tilde{\cal C}_{j}$, i.e., $ c'C_{j+1} =0$, where the matrix is
  a boundary operator in the product complex $\mathcal{C}$,
  cf.~Eq.~(\ref{eq:tp-boundary}).  Any vector equivalent to $c$ has
  the form $c+x (C_{j+1})^T$, for some $x\in\mathcal{C}_{j+1}$.
  However, such a combination is never zero, as can be verified by
  taking a dot product with $c'$.
\end{proof}

The upper bound (\ref{eq:upper-general}) immediately follows from
Statement \ref{th:upper-bound-one} by minimizing over $i$.

\subsection{Lower bounds on the distance}

To make the map with the product codes in Sec.~\ref{sec:product-codes}
evident, we start by writing out the block form of a matrix in the
product complex $\mathcal{C}=\mathcal{A}\otimes \mathcal{B}$, where
the spaces $\mathcal{A}_i$ and $\mathcal{B}_j$ have dimensions $a_i$
and $b_j$, respectively:
\begin{widetext}
\begin{equation}
  \label{eq:Cj}
  C_j=\left(
         \begin{array}[c]{c|c|c|c|c}
           A_j\otimes I(b_0)&(-1)^{j-1} I(a_{j-1})\otimes B_1& &
           \\           \hline
                            &A_{j-1}\otimes
                              I(b_{1})&(-1)^{j-2}I(a_{j-2})\otimes B_2
                                                               &
           \\ \hline
                            & & \ddots & \ddots \\ \hline
           & & & A_1\otimes I(b_{j-1})& I(a_0)\otimes B_{j}
         \end{array}\right).
     \end{equation}
     For ease of mapping of the homology group
$H_j(\mathcal{C})$ to the CSS stabilizer code with generators
$H_X=C_j$ and $H_Z=C_{j+1}^T$, we also write the latter matrix explicitly
\begin{equation} \label{eq:CjT}
C_{j+1}^T=\left(
  \begin{array}[c]{c|c|c|c|c}
    A^T_{j+1}\otimes I({b_0})& & &  \\ \hline
    (-1)^jI(a_j)\otimes B^T_1& A^T_{j}\otimes I(b_1) & &
    \\ \hline
                             & (-1)^{j-1} I(a_{j-1})\otimes B^T_2  
                             & A^T_{j-1}\otimes I(b_2) & \\ \hline
                             & & \ddots & \ddots \\ \hline 
                             & & & -I(a_1)\otimes B_{j-1}^T & A_1^T\otimes I(b_j)
    \\ \hline
&    & & & I(a_0)\otimes B_{j+1}^T
             \end{array}
                                          \right).
\end{equation}

Clearly, in general, the generator matrices $H_X=C_j$ and
$H_Z=C_{j+1}^T$ have $j+1$ column blocks, with each block row and
block column incident on no more than two non-zero blocks.  Our
strategy is to construct bounds on the distance of these codes using
Lemmas \ref{th:punctured} and \ref{th:shortened}.  Notice that for
decomposition along the block boundaries, the condition in Lemma
\ref{th:punctured} can be verified by explicitly constructing bases of
the product chain and product co-chain complexes, and using the
K\"unneth formula to make sure that no vectors are lost.

First, let us construct the codes $\mathcal{Q}^{(i,j-i)}$,
$i\in \mathbb{Z}$, each projected into a single subspace
$\mathcal{A}_i\otimes \mathcal{B}_{j-i}$ as in
Lemma~\ref{th:punctured}.  The corresponding lower bound on the
homological distance at the level $j$ of the product complex
$\mathcal{C}$ reads
\begin{equation}
  d_j(\mathcal{C})\ge \min_{i\in\mathbb{Z}}d_Z\left(\mathcal{Q}^{(i,j-i)}\right).
  \label{eq:block-decomposition}
\end{equation}
Denote $I\equiv I_{i}^j$ the index set corresponding to the subspace
$\mathcal{A}_i\otimes \mathcal{B}_{j-i}$ in $\mathcal{C}_j$.  The
punctured matrix $G_Z[I]$ is obtained by selecting the appropriate
column block in the matrix (\ref{eq:CjT}).  When expressed in terms of
the two small stabilizer codes $\mathcal{Q}_A=\css(A_i,A_{i+1}^T)$ and
$\mathcal{Q}_B=\css(B_{j-i},B_{j-i+1}^T)$ associated with the homology
groups $H_i(\mathcal{A})$ and $H_{j-i}(\mathcal{B})$, respectively,
the resulting matrix has exactly the form of the gauge generator
matrix $G_Z$ in Eq.~(\ref{eq:gen-proj-subs}).  To construct the
matching shortened matrix $(H_X)_I$, notice that only two row blocks
in $C_j$ give non-zero contribution,
$$
C_j[I_{i+1}^j\cup I_{i}^j\cup I_{i-1}^j]=\left(
  \begin{array}[c]{c|c|c}
    (-1)^{i+1} I(a_{i+1})\otimes B_{j-i-1}  \\ \hline
    A_{i+1}\otimes I(b_{j-i-1})
    &(-1)^{i} I(a_{i})\otimes B_{j-i} & \\ \hline
    &A_{i}\otimes I(b_{j-i})&(-1)^{i-1} I(a_{i-1})\otimes B_{j-i+1} \\ \hline
    & & A_{i-1}\otimes I(b_{j-i+1})
  \end{array}\right).
$$
\end{widetext}

The shortening to the middle column block, $I_i^j$, is achieved with the
help of row operations equivalent to left multiplication of the second
row block by $A^*\otimes I(b_{j-i-1})$ and of the third row block by
$I(a_{i-1})\otimes B^*$, where 
\begin{equation} \label{eq:dual-matrices-AB}
  A^*
  =\left(\begin{array}[c]{c}A_i\\ L_X^A
         \end{array}
       \right),\quad
       B^* =\left(\begin{array}[c]{c}B_{j-i}\\ L_X^B
                          \end{array}\right)
\end{equation}
are the largest-rank matrices with rows orthogonal to the columns of
$A_{j+1}$ and $B_{j-i+1}$, respectively.  Here and below, we denote
$L_X^A$, $L_Z^A$ and $L_X^B$, $L_Z^B$ the canonical logical generator
matrices (\ref{eq:cw-orthogonality}) of the same stabilizer codes,
$\mathcal{Q}_A$ and $\mathcal{Q}_B$.  As a result of the
multiplication, we obtain the shortened matrix $(H_X)_I$ in the exact
form of the stabilizer generator matrix $H_X$ in
Eq.~(\ref{eq:stab-proj}), again, when expressed in terms of the
matrices associated with the codes $\mathcal{Q}_A$ and
$\mathcal{Q}_B$.  According to Lemma \ref{th:subs-code-H}, the
corresponding stabilizer code $\css\biglb((H_X)_I,H_Z[I]\bigrb)$ has
exactly the same $Z$-distance as the subsystem code
$\css\biglb(H_X[I],H_Z[I]\bigrb)$ obtained by puncturing both matrices
$H_X={C}_j$ and $H_Z={C}_{j+1}^T$ to the single subspace
$\mathcal{A}_i\otimes \mathcal{B}_{j-i}$.

With the help of the upper bound (\ref{eq:symmetric-distance}) and the
loose lower bound (\ref{eq:loose-lower-bound}), we obtain
\begin{statement}
  \label{th:block-code-distance}
  The $Z$-distance $d_Z\equiv d_Z(\mathcal{Q}^{(i,j-i)})$ of the
  $F$-linear CSS code $\mathcal{Q}^{(i,j-i)}$ obtained by
  $Z$-puncturing the CSS code corresponding to homology group
  $H_j(\mathcal{A}\otimes \mathcal{B})$ to the subspace
  $\mathcal{A}_i\otimes \mathcal{B}_{j-i}$ satisfies the bounds
  \begin{equation}
    \max\biglb( d_i(\mathcal{A}),d_{j-i}(\mathcal{B})\bigrb)\le
    d_Z\le
    d_i(\mathcal{A})d_{j-i}(\mathcal{B}).\label{eq:block-code-distance}
  \end{equation}

\end{statement}

Since $d_0(\mathcal{A})$ and $d_0(\mathcal{B})$ are restricted to be
either zero or infinity, this gives exact values for the distance in
two special cases:
\begin{eqnarray}
  \label{eq:dQ-j-j}
  d_Z(\mathcal{Q}^{(j,0)})&=& d_j(\mathcal{A})d_{0}(\mathcal{B}),\\
  \label{eq:dQ-j-0}
  d_Z(\mathcal{Q}^{(0,j)})&=&   d_0(\mathcal{A})d_{j}(\mathcal{B}).    
\end{eqnarray}
In addition, the structure of the homologically non-trivial vectors is
somewhat clarified by the following restricted result:
\begin{statement}
  \label{th:two-step-bnd}
  Consider a vector $c\in \mathcal{C}_j$ at level $j$ in the product
  chain complex $\mathcal{C}=\mathcal{A}\times \mathcal{B}$, and
  assume that for some $i\le j$, $c$ has a non-zero weight in
  ${H}_{i}(\mathcal{A})\otimes H_{j-i}(\mathcal{B})$, while the
  components of $c$ are zero in spaces
  $\mathcal{A}_{i'}\otimes \mathcal{B}_{j-i'}$ with $i'<i$.  Then
  $\wgt(c)\ge d_{i}(\mathcal{A}) d_{j-i}(\mathcal{B})$.
\end{statement}
\begin{proof}
  The vector is a $Z$-like codeword in the CSS code with generator
  matrices (\ref{eq:Cj}) and (\ref{eq:CjT}).  The condition can be
  used to construct a $Z$-shortened code, with all blocks to the right
  of the block $\mathcal{A}_i\otimes \mathcal{B}_{j-i}$ removed as in
  Lemma \ref{th:shortened}.  This amounts to dropping all column
  blocks of $C_j$ and $C_{j+1}^T$ to the right of the $(j-i+1)$\,th
  block-column which corresponds to the subspace
  $\mathcal{A}_i\otimes \mathcal{B}_{j-i}$, and multiplication of the
  last block-row that remains non-zero in $C_{j+1}^T$ by
  $(A^*)^T\otimes I(b_{j-i})$, where $A^*$ is given by
  Eq.~(\ref{eq:dual-matrices-AB}).  After a subsequent application of
  a $Z$-puncture, so that all block columns to the left of the block
  $\mathcal{A}_i\otimes \mathcal{B}_{j-i}$ are removed as in Lemma
  \ref{th:punctured}, we obtain exactly the concatenated-stabilizer
  code in Theorem \ref{th:concat-stab-code}, constructed from
  $\mathcal{Q}_A=\css(A_i,A_{i+1}^T)$ and
  $\mathcal{Q}_B=\css(B_{j-i},B_{j-i+1}^T)$.  The $Z$-distance of this
  code is $d_Z=d_Z^Ad_Z^B=d_i(\mathcal{A})d_{j-i}(\mathcal{B})$.
  Moreover, by assumption, vector $c$ punctured to the space
  $\mathcal{A}_i\otimes \mathcal{B}_{j-i}$ is non-trivial in the
  product code, which guarantees $\wgt(c)\ge d_Z$.
\end{proof}

Clearly, the same lower bound also applies for vectors with zero
weight in all spaces $\mathcal{A}_{i'}\otimes \mathcal{B}_{j-i'}$ with
$i'>j$.  In addition, the condition of Statement \ref{th:two-step-bnd}
is automatically satisfied when $B_{j-i}$ is the last non-trivial
matrix in the complex $\mathcal{B}$, i.e., $j-i=\ell'$, see Statement
\ref{th:upper-bound-one}.  In this case, again, the upper bound in
Eq.~(\ref{eq:block-code-distance}) is saturated,
$d_Z(\mathcal{Q}^{(i,\ell')})=d_i(\mathcal{A}) d_{\ell'}(\mathcal{B})$.
The same is true also when $A_i$ is the last non-trivial boundary
operator in the complex $\mathcal{A}$, $i=\ell$; we have
$d_Z(\mathcal{Q}^{(\ell,j)})=d_\ell(\mathcal{A}) d_{j}(\mathcal{B})$.

The special cases in Statements \ref{th:block-code-distance} and
\ref{th:two-step-bnd} combine to give exact distances in the case
where one of the complexes in the product contains just one
non-trivial boundary operator.  This gives an extension of the main
result in Ref.~\onlinecite{Zeng-Pryadko-2018} to $F$-linear chain
complexes:

\begin{theorem}
  Consider a tensor product
  $\mathcal{C}=\mathcal{A}\times \mathcal{B}$ of two $F$-linear chain
  complexes, where one of the complexes contains just two non-trivial
  spaces, e.g., $\mathcal{A}=\mathcal{K}(A_1,\ldots,A_\ell)$ and
  $\mathcal{B}=\mathcal{K}(B_1)$.  Then, for any $j\in\mathbb{Z}$,
  the homological distance at level $j$ of the product complex
  $\mathcal{C}=\mathcal{A}\times \mathcal{B}$ is
  \begin{equation}    
    d_{j}(\mathcal{C})=\min_{i\in\mathbb{Z}} d_i(\mathcal{A})d_{j-i}(\mathcal{B}).
    \label{eq:exact-distance}  
\end{equation}
  \label{th:exact-distance}
\end{theorem}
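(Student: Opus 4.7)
The plan is to combine the upper bound already established in Statement~\ref{th:upper-bound-one} with a matching lower bound coming from Eq.~(\ref{eq:block-decomposition}). Because $\mathcal{B}=\mathcal{K}(B_1)$ has non-trivial spaces only in degrees $0$ and $1$, the factor $d_{j-i}(\mathcal{B})$ is finite only for $j-i\in\{0,1\}$, so the minimum in Eq.~(\ref{eq:exact-distance}) collapses to $\min\bigl(d_j(\mathcal{A})d_0(\mathcal{B}),\,d_{j-1}(\mathcal{A})d_1(\mathcal{B})\bigr)$, and Statement~\ref{th:upper-bound-one} applied with $i=j$ and $i=j-1$ immediately furnishes the corresponding upper bound on $d_j(\mathcal{C})$.

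For the matching lower bound I would invoke Eq.~(\ref{eq:block-decomposition}), which reduces $d_j(\mathcal{C})$ to the minimum of the $Z$-distances of the projected codes $\mathcal{Q}^{(i,j-i)}$. As above, only $i=j$ and $i=j-1$ contribute, giving $d_j(\mathcal{C})\ge\min\bigl(d_Z(\mathcal{Q}^{(j,0)}),d_Z(\mathcal{Q}^{(j-1,1)})\bigr)$. Both of these distances are already known in closed form: Eq.~(\ref{eq:dQ-j-j}) gives $d_Z(\mathcal{Q}^{(j,0)})=d_j(\mathcal{A})d_0(\mathcal{B})$, while the remark immediately after Statement~\ref{th:two-step-bnd}, specialized to $j-i=\ell'=1$ so that $B_1$ is the last non-trivial boundary of $\mathcal{B}$, gives $d_Z(\mathcal{Q}^{(j-1,1)})=d_{j-1}(\mathcal{A})d_1(\mathcal{B})$. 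The lower bound therefore matches the upper bound and the theorem follows.

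The one step requiring genuine care is the justification of Eq.~(\ref{eq:block-decomposition}) via Lemma~\ref{th:punctured}: one has to produce an $X$-logical generator matrix $L_X$ whose every row is supported entirely inside either the $I=I_j^j$ block or the $J=I_{j-1}^j$ block. I would deduce this from the K\"unneth formula applied to the co-chain complex $\tilde{\mathcal{C}}=\tilde{\mathcal{A}}\times\tilde{\mathcal{B}}$, which yields $\tilde H_j(\tilde{\mathcal{C}})\cong\tilde H_j(\tilde{\mathcal{A}})\otimes\tilde H_0(\tilde{\mathcal{B}})\oplus\tilde H_{j-1}(\tilde{\mathcal{A}})\otimes\tilde H_1(\tilde{\mathcal{B}})$. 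Through the correspondence in Eq.~(\ref{eq:css-code-homology}), the $X$-logicals of $\css(C_j,C_{j+1}^T)$ are parametrized by $\tilde H_j(\tilde{\mathcal{C}})$, and the two K\"unneth summands admit representatives living entirely in $\mathcal{A}_j\otimes\mathcal{B}_0$ and $\mathcal{A}_{j-1}\otimes\mathcal{B}_1$ respectively, supplying the required block-supported basis of $L_X$. This is also the step where the length-two hypothesis on $\mathcal{B}$ is essential: for a longer second complex, more than two summands appear in the analogous K\"unneth decomposition and the projection approach need not saturate the upper bound, which is precisely the gap discussed in Sec.~\ref{sec:results}.
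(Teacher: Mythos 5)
Your proposal is correct and follows the paper's own route exactly: the upper bound from Statement~\ref{th:upper-bound-one}, the lower bound from Eq.~(\ref{eq:block-decomposition}) together with the exact projected-code distances supplied by Eq.~(\ref{eq:dQ-j-j}) and the remark following Statement~\ref{th:two-step-bnd} applied with $j-i=\ell'=1$. Your K\"unneth-based construction of a block-supported logical basis, which you correctly flag as the step requiring care, is the same verification of Lemma~\ref{th:punctured}'s hypothesis that the paper sketches in the paragraph preceding Eq.~(\ref{eq:block-decomposition}).
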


In Ref.~\onlinecite{Zeng-Pryadko-2018}, we conjectured that in the
binary case, $q=2$, the identity (\ref{eq:exact-distance}) be
applicable to products of arbitrary bounded complexes.  The conjecture
was based on extensive numerical simulations of products of
length-three binary complexes corresponding to pairs of
randomly-generated CSS codes.

In addition, here we have conducted numerical simulations of product
chain complexes based on pairs of random $\mathbb{F}_q$-linear
stabilizer codes, with all CSS generators of full-row-rank, so that in
the corresponding chain complexes only the homology groups
$H_1(\mathcal{A})$ and $H_1(\mathcal{B})$ be non-trivial.  For each
$q\in\{2,3,2^2,5,7,2^3,3^2,11\}$, we generated some $2\times 10^4$
such code pairs of length $3\le a_1\le b_1\le 11$, and calculated the
homological distances $d_2(\mathcal{C})$ and $d_2(\tilde{\cal C})$ of
the corresponding (co)chain product complexes using a version of the
covering set
algorithm\cite{Prange-1962,Chua-Yang-1988,Dumer-Kovalev-Pryadko-IEEE-2017}.
Not a single instance was found where the inequality
(\ref{eq:upper-bound-one}) would not be saturated.

Notice that our search went over a tiny fraction of all code pairs, in
particular, since the number of codes (matrices) scales exponentially
with the number of entries, i.e., super-exponentially with the matrix
size.  To ensure that we did not miss any instances, we also
enumerated all pairs of non-trivial binary CSS codes of size $n\le 7$,
and constructed tensor products of the corresponding chain complexes.
Eq.~\ref{eq:exact-distance} was satisfied for all of these.


Based on these numerical results, combined with the analytical result
in Theorem \ref{th:exact-distance} and the results for multiple
products of $1$-complexes, see Sec.~\ref{sec:applications}, we propose
\begin{conjecture}
  \label{conj:product-dist-general}
  The homological distances $d_j(\mathcal{A}\times \mathcal{B})$ in a
  product of any pair of bounded chain complexes of vector spaces over
  a finite field is given by Eq.~(\ref{eq:exact-distance}).
\end{conjecture}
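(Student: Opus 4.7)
The inequality $d_j(\mathcal{A}\times \mathcal{B}) \le \min_i d_i(\mathcal{A}) d_{j-i}(\mathcal{B})$ is already Statement \ref{th:upper-bound-one}, so the entire content of the conjecture is the reverse inequality. My plan is to take an arbitrary homologically non-trivial cycle $c \in \ker C_j$ and produce a representative of the same class whose support is ``one-sided'' around a K\"unneth index $i^*$ with $[c]_{i^*} \neq 0$, so that Statement \ref{th:two-step-bnd} can be applied to give the desired bound $\wgt(c) \ge d_{i^*}(\mathcal{A}) d_{j-i^*}(\mathcal{B}) \ge \min_i d_i(\mathcal{A}) d_{j-i}(\mathcal{B})$. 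The challenge is that modifying $c$ by a boundary in general increases its Hamming weight, so the one-sided representative must be produced without losing control of $\wgt(c)$.

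Concretely, I would decompose $c = \sum_i c^{(i)}$ with $c^{(i)} \in \mathcal{A}_i \otimes \mathcal{B}_{j-i}$ along the K\"unneth grading of $\mathcal{C}_j$, and set $I(c) = \{i : [c^{(i)}] \neq 0 \text{ in } H_i(\mathcal{A}) \otimes H_{j-i}(\mathcal{B})\}$. The K\"unneth theorem ensures $I(c)$ is nonempty; choose the smallest $i^* \in I(c)$ (or the largest, whichever gives the shorter sweep). For each $i<i^*$, the graded piece $c^{(i)}$ is a cycle whose K\"unneth class vanishes, so it can be written as a sum of images $(A_{i+1}\otimes I)\,y_i + (I\otimes B_{j-i+1})\,z_i$ together with the correction terms imposed by the coupled cycle equations from Eq.~(\ref{eq:Cj}). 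One then attempts a Gaussian-style sweep from the left along the block structure of $C_{j+1}$, adding components of the boundary $C_{j+1} v$ (with $v$ assembled iteratively from $y_i,z_i$ and propagation terms) to zero out $c^{(i)}$ for all $i<i^*$.

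The main obstacle, which is precisely the obstruction that has kept the conjecture open in general, is controlling the weight during this sweep: the corrections needed to restore the cycle condition at level $i+1$ may re-populate the very graded pieces we just cleared, creating weight that outpaces our savings. A plausible remedy is to work with a \emph{minimum-weight} representative $c$ and argue by contradiction: if some $c^{(i)}$ with $i<i^*$ were non-empty, one sweep step would produce a strictly shorter representative, contradicting minimality. This strategy works cleanly when one of the complexes has length two (recovering Theorem \ref{th:exact-distance}) because only a single sweep step is ever required. That the argument also works numerically across all tested finite fields and small parameter ranges is the main evidence supporting the conjecture.

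An alternative route I would pursue in parallel is induction on the length of $\mathcal{B}$. Using a short exact sequence of chain complexes $0 \to \mathcal{B}' \to \mathcal{B} \to \mathcal{B}'' \to 0$, where $\mathcal{B}'$ truncates the top spaces of $\mathcal{B}$ and $\mathcal{B}''$ is the quotient, one obtains, after tensoring with $\mathcal{A}$, a long exact sequence in homology; the inductive hypothesis would supply distance bounds on the outer terms, and the connecting homomorphism, being composition with the top boundary operator of $\mathcal{B}$, would need a careful weight analysis. The hard part here is again quantitative rather than structural: turning an abstract isomorphism into a statement about minimum Hamming weights requires mimicking the combinatorial arguments of Sec.~\ref{sec:product-codes} at each induction step, and it is not at all obvious that minimum weights compose multiplicatively across the connecting maps of a long exact sequence.
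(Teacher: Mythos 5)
This statement is a \emph{conjecture}: the paper offers no proof of it, only numerical evidence (extensive searches over $q\in\{2,3,4,5,7,8,9,11\}$ found no counterexamples), the special case one complex has length two (Theorem~\ref{th:exact-distance}), and products of several $1$-complexes (Sec.~\ref{sec:applications}). So there is no ``paper proof'' against which to check your argument, and any attempt must either close the conjecture or honestly stop short, which yours does.

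Your first strategy --- decompose $c$ along the K\"unneth grading, locate the smallest index $i^*$ with non-trivial K\"unneth class, and try to sweep away the lower-index blocks so that Statement~\ref{th:two-step-bnd} applies --- is indeed the same block-projection idea that the paper uses to prove the length-two case, and your account of why it works there (one sweep step) is correct. But there is a concrete obstruction you have not quite named, and the paper makes it explicit: the projection to a single block $\mathcal{A}_i\otimes\mathcal{B}_{j-i}$ does \emph{not}, in general, have distance $d_i(\mathcal{A})d_{j-i}(\mathcal{B})$. The subsystem product code one gets after projecting has distance bounded only as in Statement~\ref{th:block-code-distance}, and Theorem~\ref{ex:upper-distance-bnd-nA} (with Examples~\ref{ex:odd-base-code}--\ref{ex:square-Steane}) shows the projected distance can be \emph{strictly smaller} than the product. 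This means a proof that passes through ``push everything into one block and bound the weight there'' is doomed from the start: even if the sweep preserved weight perfectly, the terminal bound it would deliver is sometimes weaker than the conjecture. Any successful argument has to exploit the simultaneous presence of weight in several blocks --- precisely the cancellations your sweep is trying to eliminate --- which is why the one-step projection argument does not extend. Your ``minimum-weight representative plus contradiction'' remedy does not escape this, because a minimum-weight representative can perfectly well have support concentrated in a single block where the projected code has sub-optimal distance.

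Your second route, via a short exact sequence of chain complexes and the induced long exact sequence in homology, is reasonable to mention as a possible organizing device, but you are right that it is purely structural: the long exact sequence controls ranks of homology groups, not minimum Hamming weights of representatives, and there is no general reason the connecting map should be weight-multiplicative. As written, neither strategy closes the gap; the proposal is an honest assessment of why the conjecture is open, not a proof.
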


Of course, one should be aware that, even when highly suggestive,
numerical evidence cannot substitute a proof.  A recent example is the
Hedetniemi conjecture about the chromatic number in a tensor product
of graphs\cite{Hedetniemi-1966,Hedetniemi-thesis-1966}.  The
conjecture held up for over half a century; a counterexample was only
recently discovered by Yaroslav Shitov in a beautiful 2019
paper\cite{Shitov-2019,Kalai-2019}.  Significantly, the smallest
graphs known so far to provide a counterexample to Hedetniemi's
conjecture have over $10^4$ vertices\cite{Zhu-2020}.

\subsection{Applications in quantum error correction}
\label{sec:applications}

In classical error correction it is usually safe to assume a channel
model, where errors may happen during transmission but not during
encoding/decoding.  In comparison, when a quantum error-correcting
code (QECC) is used, errors may happen at any step; to measure a
syndrome one has to perform a complex set of elementary quantum
unitaries, gates, which may result not only in additional data errors
but also syndrome measurement errors.  Measurement errors become more
likely with operators of large weight, as the measurement circuit has
to be constructed from elementary quantum gates which typically
can operate at most on two qudits at a time.

As a result, fault-tolerant (FT) operation requires quantum codes
where all (or most) stabilizer generators have small weights.  These
are analogous to classical low-density parity-check (LDPC) codes.

Here we consider tensor products of several $F$-linear $1$-complexes,
chain complexes with just two non-trivial spaces.  Basic parameters
such as space dimensions, row and column weights, or homological
distances do not depend on the order of the terms in the product.
Further, if the matrices used to construct $1$-complexes are
$(\upsilon,\omega)$-sparse, that is, their column and row weights do
not exceed $\upsilon$ and $\omega$, respectively, the matrices in the
resulting $m$-chain product complex are $(m\upsilon,m\omega)$-sparse.
In particular, when ${\cal K}={\cal K}(R)$ is a $1$-complex associated
with a circulant check matrix of the repetition code,
$\mathcal{K}^{\times D}$ recovers all the $D$-dimensional toric codes.

First, consider an $r\times c$ full-row-rank $q$-ary matrix $P$ with
$r<c$, and assume that the $F$-linear code $\mathcal{C}_{P}^\perp$ has
distance $\delta$.  The $1$-complex $\mathcal{K}\equiv \mathcal{K}(P)$
has two non-trivial spaces of dimensions $r$ and $c$; the
corresponding homology groups have ranks $0$, $\kappa$ and the
distances $\infty$, $\delta$.  The $1$-complex
$\tilde{K}\equiv\mathcal{K}({P}^T)$ generated by the transposed matrix
has equivalent spaces taken in the opposite order, with the same
homology group ranks, but the distances are now $1$ and $\infty$,
respectively.  It is easy to see that in any chain complex constructed
as tensor products of $\mathcal{K}$ and/or $\tilde{\mathcal{K}}$,
there is going to be only one homology group with a non-zero rank.
Since order of the products is not important, we will write these as
powers.  For $(a+b)$-complex
$\mathcal{K}^{(a,b)}\equiv\mathcal{K}^{\times a}\times
\tilde{\mathcal{K}}^{\times b}$, the only non-trivial homology group
is $H_a(\mathcal{K}^{(a,b)})$, acting in the space of dimension
$$
n_a(\mathcal{K}^{(a,b)})=\sum_{i=0}^a c^{2i} r^{a+b-2i}{a\choose
i}{b\choose i}<(r+c)^{a+b},
$$
it has rank $\kappa^{a+b}$ and distance $\delta^a$.  The corresponding
quantum CSS code has the minimum distance $\min(\delta^a,\delta^b)$,
and its stabilizer generators have weights not exceeding
$(a+b)\max(\upsilon,\omega)$.

Good weight-limited classical LDPC codes with asymptotically finite
rates $\kappa/c$ and finite relative distances $\delta/c$ can be
obtained from ensembles of large random
matrices\cite{Gallager-1962,Gallager-book-1963,Litsyn-Shevelev-2002,
  Richardson-Shokrollahi-Amin-Urbanke-2001,Frolov-Zyablov-2011,Frolov-2015}.
Any of these can be used in the present construction.  Then, for any
pair $(a,b)$ of natural numbers, we can generate weight-limited
quantum LDPC codes with finite rates and the distances $d_X=\delta^a$,
$d_Z=\delta^b$ whose product scales linearly with the code length.
The quantum hypergraph-product codes are a special case of this
construction with $a=b=1$.

More generally, take arbitrary $r_i\times c_i$ matrices $P_i$,
$i=1,2,\ldots$ with elements from $F\equiv \mathbb{F}_q$.  Let
$F$-linear codes with parity check matrices $P_i$ and $P_i^T$,
respectively, have parameters $[c_i,\kappa_i,\delta_i]$ and
$[r_i,\tilde{\kappa}_i,\tilde{\delta}_i]$, where the distance is
assumed infinite whenever the corresponding code is trivial,
$\kappa=0$.  Then, for a product of $m$ such $1$-complexes, the space
dimensions and ranks of the homology groups following from the
K\"unneth formula can be written in terms of the generating
polynomials
\begin{eqnarray*} n^{(m)}(x)&\equiv &n_0^{(m)}+xn_1^{(m)}+\ldots x^m
n_m^{(m)}\\&=&\prod\nolimits_{j=1}^m(r_j+x c_j),\\ k^{(m)}(x)&\equiv
&k_0^{(m)}+xk_1^{(m)}+\ldots x^m
k_m^{(m)}\\ &=&\prod\nolimits_{j=1}^m(\tilde{\kappa}_j+x \kappa_j).
\end{eqnarray*}
The homological distance $d_j^{(m)}$ can be seen as the minimum over
the products of distances corresponding to those terms that give
non-zero contributions to $k_j^{(m)}$, with the substitution
$\kappa_j\to \delta_j$, $0\neq\tilde{\kappa}_j\to 1$.

It is easy to check that none of the higher-dimensional quantum
hypergraph-product codes discussed here have parameters that are
better than for regular QHP codes ($m=2$) originally constructed by
Tillich and Z{\'e}mor\cite{Tillich-Zemor-2009}.  In addition, the row-
and column-weights of the corresponding matrices tend to get bigger
with increasing $m$.  The advantage of higher-dimension QHP codes, or,
more generally, codes from $m$-chain complexes with $m\ge4$, is that
the rows of matrices $G_X^{(a)}=\mathcal{K}_{a}$,
$G_Z^{(a)}=\mathcal{K}_{a+1}^T$ satisfy a large number of linear
relations resulting from the orthogonality with the matrices $K_{a-1}$
and $K_{a+2}$, respectively.  These can be used to correct syndrome
measurement errors.  Even though the resulting syndrome codes do not
have large distances (with a finite probability some errors remain),
the use of such codes in repeated measurement setting could simplify
the decoding and/or improve the decoding success probability in the
case of adversarial noise\cite{Campbell-2018}.  Such improvements with
stochastic noise have been demonstrated numerically in the case of
$D=4$ toric codes in
Ref.~\onlinecite{Breuckmann-Duivenvoorden-Michels-Terhal-2017}.

\section{Extensions}
\label{sec:extensions}

Throughout this work, we concentrated on the Hamming distance.  A
simple, and yet offering a range of possible applications, extension
of Theorem \ref{th:concat-stab-code}, Statement
\ref{st:upper-bound-complex}, and Theorem \ref{th:exact-distance} can
be given by using \emph{weighted distances}, defined for a vector
$c\in F^n$ in terms of the norm
\begin{equation}
\wgt_W(c)\equiv \sum_{i:c[i]\neq0} W_i,\label{eq:weighted-wgt}
\end{equation}
where $W\equiv (W_1,W_2,\ldots,W_n)$ is a vector of positive weights
$W_i>0$, $i\le n$.  For the corresponding proofs to work, the only
requirement is that the weights $W^{C_{i,j}}$ in each space
$\mathcal{C}_{ij}\equiv \mathcal{A}_i\otimes\mathcal{B}_j$ used to
form the product complex $\mathcal{C}=\mathcal{A}\times \mathcal{B}$
be related to the weights $W^{A_i}$ and $W^{B_j}$ in the original
complexes, namely, $W^{C_{i,j}}=W^{A_i}\otimes W^{B_j}$.  Indeed, all
the proofs are based either on Eq.~(\ref{eq:log-proj}), or a
projection inequality as in Eq.~(\ref{eq:projection-decomposition});
both arguments are readily modified to account for weighted
norm~(\ref{eq:weighted-wgt}).

In particular, this implies an extension to extremal length $L_1$
(systole) and higher-dimensional analogs $L_j$, $j>1$, representing
minimal structures with non-trivial homology on a given
manifold\cite{*[{See, e.g., }] [{}] Pacini-2019}.  Indeed, in the
simplest case, the edge ($j=1$), plaquette ($j=2$), etc.\ weights
associated with a given tessellation can be chosen as the
corresponding Euclidean length, area, etc.  Then the weighted norm
(\ref{eq:weighted-wgt}) gives the corresponding measure of the
elements in the structure, and the homological distance---the
corresponding minimum, going over to $L_j$ in the continuum limit.  We
assume the manifolds be sufficiently smooth so that the corresponding
limits exist\cite{*[{See, e.g.,\ }] [{}] Pierpont-1906}.

Second, an extension of some of the bounds to chain complexes of
$K$-modules, modules over a commuting ring $K$, is possible if $K$ is
a principal ideal domain (PID).  Here we only consider the ring
$K=\mathbb{Z}_q$ of modular integers, and assume torsion-free case,
i.e., with all Smith normal form invariants of all matrices either
zero or one.  In this case one
gets\cite{Jiang-Kovalev-Zeng-Pryadko-unpublished-2020}
$d_Z^C\ge d_Z^Ad_Z^B$ for the stabilizer-product code in Theorem
\ref{th:concat-stab-code}.  Further, the lower bound in
Theorem~(\ref{th:loose-lower-bound}) remains intact, while
Eq.~(\ref{eq:exact-distance}) also becomes an inequality,
$d_j(\mathcal{C})\ge \min_{i\in\mathbb{Z}}d_i(\mathcal{A})
d_{j-i}(\mathcal{B})$.

\medskip\noindent\textbf{Acknowledgments:} We are grateful to Benjamin
Audoux, Daniel Gottesman, and Markus Grassl for helpful conversations.
This work was supported in part by the NSF Division of Physics via
grant No.\ 1820939.

\bibliography{lpp,qc_all,more_qc,ldpc,linalg,percol,sg,spin,analiz}

\end{document}